\documentclass[11pt,letter,table,usenames,dvipsnames]{article}
\usepackage{verbatim} 

\usepackage{setspace}
\usepackage{amssymb}
\usepackage{amsmath}
\usepackage{amsthm} 
\usepackage{graphicx}
\usepackage{hyperref}
\usepackage{mathpazo}

\hypersetup{colorlinks=true,citecolor=Maroon,urlcolor=Maroon, linkcolor=Maroon,anchorcolor =Maroon}
\usepackage{natbib}
\usepackage{lscape}
\usepackage{tikz}
\newcommand*\circled[1]{\tikz[baseline=(char.base)]{
            \node[shape=circle,draw,inner sep=2pt] (char) {#1};}}

\usepackage[left=1in, right=1in, top=1in, bottom=1in]{geometry}

\usepackage{aliascnt}
\usepackage[nameinlink]{cleveref}
\crefname{equation}{equation}{equations}
\creflabelformat{equation}{#2#1#3}

\newtheorem{defi}{Definition}

\newtheorem{proposition}{Proposition}

\newaliascnt{lemma}{theorem}
\newtheorem{lemma}[lemma]{Lemma}
\aliascntresetthe{lemma}
\crefname{lemma}{Lemma}{Lemmas}
\Crefname{lemma}{Lemma}{Lemmas}

\begin{document}

\title{\textbf{A Linear Model of Geopolitics}\footnote{Li: University of Massachusetts; Zhang: Tsinghua University. The paper was completed when both authors were affiliated with Boston College. We thank the editor and two anonymous reviewers for their helpful comments. We also thank Jim Anderson, Leonardo Baccini, Susanto Basu, Emily Blanchard, Richard Chisik, Kim-Sau Chung, Dave Donaldson, Thibault Fally, Pedro Gomis-Porqueras, Wen-Tai Hsu, Wolfgang Keller, Tilman Klumpp, Hideo Konishi, Arthur Lewbel, Kalina Manova, Thiery Mayer, Steve Redding, John Ries, Dan Trefler, and participants at various seminars and conferences for their comments on the previous draft. The standard disclaimer applies.}}
\author{Ben G. Li  \and Penglong Zhang }
\date{This draft: \today}

\maketitle
\thispagestyle{empty}

\begin{abstract}
Geopolitics is shaped by trade and borders. We develop a general-equilibrium model in which both are endogenously determined in a linear world. Their interaction rationalizes geopolitical outcomes that cannot be obtained when either trade or borders are treated as exogenous. This unified and tractable framework is used to study political economy, security, and ideology within and across states. \textit{(JEL Codes: F50, P16, N40)}

\medskip \noindent \textbf{Keywords:} Endogenous borders, international security,  gravity model, political ideology \\

\end{abstract}

\section{Introduction}

Geopolitics arises from the interaction between trade and borders. Borders restrict trade, while trade reshapes the value and stability of borders. When external trade becomes more important than internal trade, existing borders no longer reflect underlying economic interests and come under pressure to adjust. Treating either trade or borders as fixed therefore provides only a partial-equilibrium view. A tractable framework that endogenizes both has long been missing from the economist’s toolkit.

Borders are difficult to model because they are consensual in formation and multilateral in effect. A polity cannot unilaterally place a border within another polity’s territory; borders require mutual consent and are shared between adjacent polities. Once neighboring polities jointly establish borders, they generate political and economic externalities for non-participating ones. Border formation thus appears bilateral but is inherently multilateral. Establishing equilibrium with heterogeneous polities is already nontrivial, even without modeling trade—which likewise appears bilateral but is inherently multilateral.

In this paper, we present a general equilibrium model with endogenous trade and endogenous borders. We consider a continuum of locales located along a world line. Domestic trade is less costly than foreign trade, encouraging locales to form larger states, whereas larger states face higher governance costs, pushing locales to form smaller states. The world line therefore operates as a market for statehood, in which locales select others with whom to form states. This model yields borders with analytical characterizations, generating tractable geopolitical interactions both within and across states.

Locales that are geographically proximate to others enjoy locational advantages in trade. While the trade literature has long emphasized such trade-cost advantages, it has largely abstracted from their geopolitical implications. In our model, locales join neighboring states that are more proximate to the rest of the world. The resulting state system, represented as a partition of the world line, admits a unique equilibrium (\Cref{sec:theory}). The model provides a unified framework for analyzing a broad class of contemporary geopolitical questions. It addresses political economy issues such as trade and migration (\Cref{sec:polecon}); extends to political security, including border stability, state size, and suffrage expansion (\Cref{sec:polgeo}); and applies to political ideology, including national opinion and regional separatism (\Cref{sec:polideo}).


We contribute to two bodies of literature. The first examines states and borders, which have largely evolved as two distinct branches. The “state” branch theorizes optimal state size by balancing economies of scale against the costs of governing large states \citep*{Friedman77,BB80,AS97,ASW00,AS05,ASW05,AS06,DLOW11,GPV16}. In these models, borders are either absent or introduced solely to compute state size. As a result, borders play no material role, limiting the scope of this literature for geopolitical analysis. The “border” branch studies the formation of state borders using simulation \citep*{TCTG13,Weese16,Allen23,FKLS23}. Because borders typically lack regular shapes—and interactions among multiple borders further compound this irregularity—endogenizing borders quickly encounters the curse of dimensionality. Existing approaches therefore rely on fixed real-world maps or specific algorithms to discipline border formation. While these methods address borders directly, deriving general geopolitical implications remains difficult as outcomes depend on the chosen map or algorithm and must be obtained through simulation.

We use a linear world to integrate state size and state border decisions in a single model. Economics has a long tradition of employing one-dimensional environments to isolate mechanisms that are otherwise obscured, including \citet{Hotelling29} on spatial competition, \citet{Black48} and \citet{Downs57} on majority-rule voting, \citet*{DFS77} on comparative advantage, and \citet{OF80} on urban structure. In a linear world, borders are points on a line and therefore admit analytical characterization. State size and state borders become two sides of the same object: solving for one immediately determines the other. This framework allows us to model geopolitics through border behaviors. As noted earlier, consolidating trade and borders is challenging because both involve multilateral interests and are jointly determined. In our linear setting, these interdependencies remain fully endogenous yet become tractable. Linearization transforms an otherwise intractable high-dimensional general-equilibrium problem into a feasible border-drawing problem.


The second body of literature studies the interaction between trade and politics. A large body of work examines how trade shapes domestic politics, including checks and balances \citep*{AJR05}, parliamentary behavior \citep{PT14}, democratization \citep{GT14}, military conflict \citep*{SS01,MMT08b,AY10,BO14}, contract enforcement \citep*{AM02,RL07,Anderson09}, and social organization \citep*{Greif94}. Another extensive literature investigates how trade interacts with international politics, particularly various international relations \citep*{BB02,Krishna03,BB04,ELSW11,KS14,GHM16,GT21,CMS25a,CMS25b}. All modern politics, domestic and international, rests on states as their fundamental units. By treating the state itself as an outcome of trade interests held by polities, our framework integrates domestic and international politics through the endogenous formation, evolution, and dissolution of states.


The rest of the paper is structured as follows. In \Cref{sec:theory}, we present the model and establish the existence and uniqueness of equilibrium. In \Cref{sec:polecon,sec:polgeo,sec:polideo}, we apply the model to political economy, security, and ideology, respectively, within and across states. In \Cref{sec:conclude}, we conclude.

\section{The Baseline Model \label{sec:theory}}

\subsection{Environment}

Consider a world represented by a continuum of locales, indexed by \( t \in [-1,1] \). The midpoint locale \( t = 0 \), serving as the World Geometric Center (WGC), divides the world into left and right hemispheres. In each hemisphere, there are two directions: proximal (towards the WGC) and distal (away from the WGC). For expositional convenience, our analysis often uses the right hemisphere, while the same arguments apply symmetrically to the left hemisphere.

All locales have identical quantities of land \( z \) and initial labor \( l^0 \), both inelastically supplied to produce locale-specific differentiated goods. Locales have the same production function
\begin{equation}
y(t) = z(t)^{\alpha} l(t)^{1-\alpha}, \label{eq:prodfn}
\end{equation} 
where \( 0 < \alpha < 1 \), \( z(t) \) represents the land at locale \( t \), and \( l(t) \) the labor at locale \( t \). The land \( z(t) \) is immobile (i.e., affixed to locale \( t \)) and owned by the lord of locale \( t \). Labor  within a state (defined later) migrates freely. Firms compete perfectly in production and sales.

At each locale $t$, the lord and labor have an aggregate consumption of $C(t)=C^z(t)+C^l(t)$. The lord has utility function
\begin{equation}
U(t) = \frac{1}{1-\gamma} C^z(t)^{1-\gamma} - h S(t),
\label{eq:utilityfn}
\end{equation}
where $\gamma > 1$, $h > 0$, and $h S(t)$ represents governance costs proportional to state size $S(t)$. The governance costs arise from congestion: as states grow larger, domestic conflicts intensify, increasing the coordination burden borne by lords \citep{ASW00}. The labor has utility function 
\begin{equation}
V(t)=\frac{\psi}{1-\gamma} C^l(t)^{1-\gamma},\label{eq:utilityfn0}
\end{equation}%
where $\psi>0$ is a free scalar that allows a potential difference in their marginal utility of consumption. The consumption of both the lord and the labor consists of all goods produced globally, aggregated using a Cobb-Douglas function:
\begin{equation}
C(t)\equiv \exp \left(\int_{-1}^1 \ln c(t,s) ds\right),  \label{eq:CES}
\end{equation}%
where $c(t,s)$ is the quantity of the good made by locale $s$ and consumed at locale $t$.

Without loss of generality, we assume that consumers bear the trade costs. Specifically, to consume \( c(t,s) \) units of the good produced at locale \( s \), consumers at locale \( t \) pay for \( c(t,s) \) and a trade cost. We specify the trade cost as
\begin{equation}
d(t,s)= \exp(a(t,s) g(t,s)),
\end{equation}
where $a(t,s)$ represents a political cost and $g(t,s)$ a physical cost. As in their analysis, we assume $a(t,s)=0$ for domestic trade and $a(t,s)=1$ for interstate trade. Unlike their symmetric locales, our locales differ in locations such that $g(t,s)= \tau |s-t|$, where $\tau>0$ is the physical cost per unit of distance.\footnote{To introduce physical trade costs that vary with distance, we follow the trade literature to use the exponential function form (see \citet{AVW04} for a review), micro-founded by the aggregation of incremental iceberg costs as the distance between the increments tends to zero \citep{AA14}.} Thus,
\begin{equation}
d(t,s)=\left\{ 
\begin{array}{ll}
1, & \text{if }s\in n_t, \\ 
\inf_{\tilde{t}\in n_t}\exp (\tau \left\vert s-\tilde{t}\right\vert ), & \text{if }%
s\not\in n_t,%
\end{array}%
\right.  \label{eq:ddef}
\end{equation}
where $n_t$ refers to the state to which locale $t$ belongs. The limit inferior $\inf_{\tilde{t}\in n_t}$ indicates that trade costs apply only beyond the farthest national border. $d(t,s)$ is in the iceberg form: only one unit of the good reaches locale \( t \) if \( d(t,s) \ge 1 \) units are shipped from locale \( s \) to locale \( t \). 

The world is partitioned into states by a set of borders:
\begin{equation}
\{b_{n}\} \equiv 
\{b_{-N},...,b_{-1},b_{-0},b_{0},b_{1},...,b_{N}\},\label{eq:partitiondef}
\end{equation}
where $b_{-0} \le 0 \le b_{0}$ and $-1 \le b_{-N}< b_N \le 1$. Hereafter, State $n>0$ ($n<0$) denotes a state in the right (left) hemisphere with size $S_n>0$:
\begin{equation}
S_n= b_{n} - b_{n-1}.
\end{equation}
State $0$, referring to the state formed by $b_{-0}$ and $b_{0}$, has size $S_0=b_{0}-b_{-0}$ and is in both hemispheres. Note that in both hemispheres, the left border of state $n$ is $b_{n-1}$, the right border $b_{n}$. As a convention, we let the distal (proximal) border of every state be an open (closed) endpoint. For example, if state $n$ is in the right hemisphere, its left border $b_{n-1}$ is part of state $n$ while its right border $b_{n}$ is not part of state $n$ but part of state $n+1$. Since locales are located on a continuum, we require that locales form states only with their adjacent locales (i.e., enclaves are not allowed).

Now we introduce  
\begin{defi}
$\{b_n^*\}$ is an equilibrium world partition if it satisfies criteria (i) and (ii):
\begin{equation}
\text{(i) }b^{L}(t)=\sup \{b_n^* | b_n^* < t\} \text{ and } b^{R}(t)=\inf \{b_n^* | b_n^* > t\}
\text{ for any } t \in [-1,1], \label{eq:equi_line1} 
\end{equation}
and
\begin{multline}
\text{(ii) For any $\widehat{b}^{L}(t) \neq  b^{L}(t)$ and $\widehat{b}^{R}(t) \neq  b^{R}(t)$, if $U(t | \widehat{b}^{L}(t),\widehat{b}^{R}(t))> U(t| {b}^{L}(t),{b}^{R}(t))$}, \\ \text{there must be at least one $t' \neq t $ such that $U(t' | \widehat{b}^{L}(t),\widehat{b}^{R}(t))< U(t'| {b}^{L}(t),{b}^{R}(t))$}.\label{eq:nobetter}
\end{multline}
\end{defi} 
\noindent In the definition, criterion (i) links locales with their states through geometric coordinates. Locales in the same state share left and right borders. Criterion (ii) ensures that no lord can benefit from deviating from the equilibrium partition without harming other lords. Since labor does not participate in the drawing of borders, we hereafter use the terms \textit{locale} and \textit{lord} interchangeably in border-drawing affairs. Since borders imply states, the terms \textit{equilibrium world partition} (i.e., $\{b_n^*\}$) and \textit{equilibrium states} are equivalent. If excluding a lord from a state can improve the welfare of other lords in that state, then the state with the included lord is not part of the equilibrium partition. Moreover, even if all lords in a state agree on the state's borders, the state is still not part of the equilibrium partition if there exists one foreign lord that wants to join the state and allowing it to join does not harm any existing lord. 

To close the model, we need the markets of goods to be in equilibrium. Given that all locales have identical factor endowments and the Cobb-Douglas production and consumption structures, we set the factory-gate price of all goods to be \( p \), regardless of their origins. Firms at locale \( s \) are indifferent to the destination of their sales. The market clearing condition for the good produced at locale \( s \) is given by
\begin{equation}
\int_{-1}^{1}y(t,s)dt=y(s), \label{eq:marketclear}
\end{equation}
where \( y(s) \) is the total output of locale \( s \). The market-clearing condition \eqref{eq:marketclear} is invariant across origins (e.g., between $s$ and $s'$), ensured by the Cobb-Douglas consumption structure \eqref{eq:CES} and the fact that consumers pay trade costs. 

Now we define an equilibrium of the model:  
\begin{defi}
An equilibrium of the model takes the form of 
\begin{equation}
\Omega \equiv \left(b^{L}(t), b^{R}(t), C^z(t), C^l(t), l(t), y(t), \forall t \in [-1,1] \right), \label{eq:equilibrium_looking}
\end{equation}
where
\begin{equation}
\{b^{L}(t), b^{R}(t),\forall t\}=\{b_{n}^*\}. 
\end{equation}
\end{defi}

The $\Omega$ defined above constitutes a subgame-perfect equilibrium, since borders are chosen prior to production and consumption. To translate border choices into states, we introduce overlords, who draw borders on behalf of fellow lords. The timing of events is therefore as follows.
\begin{description}

\item[{Date 1}] The lord of locale \( t \) makes offers to the lords of his proximal locales and himself to be his overlord. If any of these offers is accepted by a proximal lord, that lord becomes his overlord. If multiple offers are accepted by proximal lords, the most proximal one among them becomes his overlord. If none of the offers are accepted by proximal lords, the lord becomes his own overlord and decides whether to accept any offer from his distal locales. All lords act simultaneously to make offers, then decide on offers, and settle their roles as overlords or (ordinary) lords. For the lord of an arbitrary locale $t$, his overlord is the lord at a locale $t^*$ that satisfies $|t^*| \le |t|$.

\item[{Date 2}] Overlords set borders for themselves and distal neighbors whose offers they have taken. The set of overlords $\{t^*\}$ informs the set of borders \(\{b_n^*\}\), partitioning the world into states.

\item[{Date 3}] Production and consumption occur. Lords (including overlords) and labor forces produce goods, which are then traded and consumed within and across borders.
\end{description}
The three dates occur sequentially, with each providing information that is remembered by agents on subsequent dates.


\subsection{Solving the equilibrium}
The labor at locale $t$ solves
\begin{align}
\max_{\{c^l(t,s)\}_{s},\, l(t)} \; V(t) \quad \text{s.t.} \quad \int_{-1}^{1} p(t,s)c^l(t,s)ds = w(t)l(t). \label{eq:laboroptimization}
\end{align}
Since labor does not participate in border drawing, all labor decisions are made on Date 3. The lord at locale $t$ solves
\begin{align}
\label{eq:lordoptimization}
\max_{\{c^z(t,s)\}_{s},\, z(t),\, b^L(t),\, b^R(t)} \; U(t) \quad \text{s.t.} \quad  
\begin{aligned}[t]
\int_{-1}^{1} p(t,s)\, c^z(t,s)\, ds &= r(t)z(t), \\
|b^L(t)-b^R(t)| &= S_{n_t}, \\
b^L(t) \le t & \le b^R(t).
\end{aligned}
\end{align}
Like labor, the lord’s factor input and consumption decisions are made on Date 3. Unlike labor, the lord also chooses borders on Date 2. Among the lords, those who are not overlords delegate their border-drawing decisions to their overlords on Date 1. Below, we solve the model by backward induction.

\paragraph{Solving Date 3.} The total expenditure at locale $t$ on the good produced by locale $s$ equals\footnote{See Appendix \ref{aggCproof} for the derivation.}
\begin{equation}
p(t,s)c(t,s)=\frac{C^z(t)^{1-\gamma}}{\lambda^z(t)}+\frac{\psi C^l(t)^{1-\gamma}}{\lambda^l(t)}  \equiv \kappa(t), \label{eq:aggC}
\end{equation}
where $\lambda^z(t)$ and $\lambda^l(t)$ are the shadow prices. By taking the integral of equation \eqref{eq:aggC} across destination locale $t$, we obtain the GDP of the good's origin locale $s$:
\begin{equation}
py(s)=\int_{-1}^1 p(t,s)c(t,s)dt=\int_{-1}^1 \kappa(t) dt \equiv \kappa. \label{eq:localegdp}
\end{equation}
Since term $\int_{-1}^1 \kappa(t) dt$ in equation \eqref{eq:localegdp} does not vary by $s$, $y(s)$ is locale-invariant as well. Intuitively, trade costs are all paid by consumers and thus the Cobb-Douglas consumption structure ensures that all locales face the same global demand. Thus, we denote locale GDP uniformly by $\kappa$ in equation \eqref{eq:localegdp}. Then, factor incomes for lord and labor are given by
\begin{align}
r(s) z(s) &= \alpha p y(s) = \alpha \kappa, \label{eq:factorincomeZ}\\
w(s) l(s) &= (1-\alpha) p y(s) = (1-\alpha)\kappa, \label{eq:factorincomeL}
\end{align}
for every locale ($s$) in the world.

On Date 3, every state has a statewide labor market. In this market, the total labor supply is the aggregate of the initial labor $l^0$ across locales of the state, while the total labor demand is the aggregate of the $l(s)$ in equation \eqref{eq:factorincomeL} across locales of the state. Since land is immobile within a state, the resulting wage rate is equalized across locales within the state. That is, for any state $n$, its initial labor will be distributed uniformly across locales in equilibrium: 
\begin{equation} 
l(s)=l(s') \text{ for any } s,s' \in n,\text{       and       } \int_{s \in n } l(s)ds =\int_{ s \in n } l^0ds.\label{eq:equilabor}
\end{equation}
Then we have
\begin{equation} 
y= z^{\alpha}{l^0}^{1-\alpha}, \text{       and       } py=rz+w l^0, \label{eq:allprices}
\end{equation} 
for every locale in the world.

\sloppy Three observations are immediate. First, the locale-level variables defined in equations
\eqref{eq:factorincomeZ}--\eqref{eq:allprices} are equalized across locales and may therefore be represented by the common vector $\{p, y, r, z, w, l\}$. Second, these variables are invariant to the world partition, implying perfect foresight over them for all agents prior to Date~3. Third, the remaining equilibrium objects in $\Omega$ are $C^z(t)$, $C^l(t)$, $b^{L}(t)$, and $b^{R}(t)$, all of which depend on the world partition $\{b_n^*\}$ determined on Date~2.

\paragraph{Preparation for Solving Date 2.} 
Among the locale-invariant prices $p$, $r$, and $w$, one can be normalized. Normalizing $p = \frac{z}{2} r$, we obtain
\begin{lemma} \label{lemma:suff}
\begin{equation}
C^z(t)=1/R(t),\label{eq:suff}
\end{equation}
where 
\begin{equation}
R(t)\equiv \exp \left(\int_{-1}^1 \ln d(t,s) ds\right). \label{eq:remoteness0}
\end{equation} 
\end{lemma}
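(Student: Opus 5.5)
Because the consumption index \eqref{eq:CES} is Cobb--Douglas with unit weights over $s\in[-1,1]$, the plan is to solve the lord's Date~3 problem directly and then substitute the normalization $p=\frac{z}{2}r$.

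First, the consumer price. Consumers pay trade costs in iceberg form, and every factory-gate price equals the common $p$ (established after \eqref{eq:localegdp}). Hence the delivered price of good $s$ at $t$ is $p(t,s)=p\,d(t,s)$.

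Second, the lord's spending rule. The lord maximizes $\frac{1}{1-\gamma}C^z(t)^{1-\gamma}$, which is a monotone transformation of $C^z(t)$. So the lord simply maximizes $\int_{-1}^1\ln c^z(t,s)\,ds$ subject to $\int_{-1}^1 p(t,s)c^z(t,s)\,ds=r z$. The governance cost $hS(t)$ does not depend on consumption, so it drops out at Date~3. The first-order condition for each $s$ gives $1/c^z(t,s)=\mu\,p(t,s)$, so the expenditure $p(t,s)c^z(t,s)$ is the same for every $s$. The total mass of goods is $\int_{-1}^1 ds=2$, so this expenditure equals $rz/2$. The same equal-split property is what underlies \eqref{eq:aggC}, so I can cite that derivation rather than redo it. Therefore
\[
c^z(t,s)=\frac{rz}{2p\,d(t,s)}.
\]

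Third, substitute into \eqref{eq:CES}:
\[
\ln C^z(t)=\int_{-1}^1\Big[\ln\frac{rz}{2p}-\ln d(t,s)\Big]ds
=2\ln\frac{rz}{2p}-\ln R(t),
\]
using the definition \eqref{eq:remoteness0} of $R(t)$. With $p=\frac{z}{2}r$, the ratio $\frac{rz}{2p}$ equals $1$, so its logarithm vanishes and $C^z(t)=1/R(t)$.

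The step I expect to be the main obstacle is bookkeeping rather than mathematics. I need to check that the normalization makes $rz/(2p)$ exactly $1$ and that the measure of the goods space is $2$. I also need to confirm that $d(t,s)$ in \eqref{eq:ddef} is measurable and bounded, so that $R(t)$ is finite. That holds because $d(t,s)$ is piecewise equal to $1$ or an exponential of a distance, which is bounded by $e^{2\tau}$.
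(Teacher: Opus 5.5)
Your proposal is correct and follows the paper's proof essentially step for step. The Cobb--Douglas first-order condition gives equal expenditure $p(t,s)c^z(t,s)=\kappa^z(t)=rz/2$ on every good, substituting $c^z(t,s)=\kappa^z(t)/(p\,d(t,s))$ into \eqref{eq:CES} yields $C^z(t)=(rz/(2p))^2/R(t)$, and the normalization $p=rz/2$ finishes it. Your extra remarks, that $hS(t)$ plays no role at Date~3 and that $1\le d(t,s)\le e^{2\tau}$ keeps $R(t)$ finite, are harmless details the paper leaves implicit.
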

\begin{proof}
See Appendix \ref{suffproof}.  
\end{proof}

The $R(t)$ in \Cref{lemma:suff}, defined as an aggregate of locale $t$'s bilateral distance from the rest of the world, measures locale $t$'s \textit{remoteness} from the rest of the world. It can alternatively be interpreted as the price index faced by locale $t$'s consumers. Since all lords in the world have the same income, their real income can be expressed as $1/R(t)$ after the normalization. The function $R(t)$ greatly facilitates the analysis of  the world partition  $\{b_{n}^*\}$ due to its following properties.

First, $R(t)$ applies to all locales in the same state as locale $t$, because domestic trade cost is zero. Denote the state of locale $t$ by $n_t$, then 
\begin{align}
R(t)&=  \exp \left( \int_{ - 1}^{{b_{n_{t - 1}}}} {\tau ({b_{n_t - 1}} - s)ds}    + \int_{{b_{n_t}}}^1 {\tau (s - {b_{n_t}})ds} \right) \\
&=\exp \left( \frac{\tau}{2} [ (1+b_{n_{t-1}})^2 + (1-b_{n_t})^2 ]\right) \equiv R_{n_t},  \label{eq:remoteness}
\end{align}%
where the first (second) term in the exponential function corresponds to the remoteness to the left-side (right-side) world. The $n_{t-1}$ in equation \eqref{eq:remoteness} refers to the left-side neighboring state of state $n_t$. The right border of state $n_{t-1}$, namely $b_{n_{t-1}}$, is the left border of state  $n_{t}$. By equation \eqref{eq:remoteness}, $R(t)$ is increasing in state $n_t$'s minimal distance from the WGC.\footnote{If locale $t$ is in the right hemisphere (i.e., $t>0$), $b_{n_{t-1}}$ is the point where locale $t$ and its fellow locales start paying trade costs for their imported goods from their left-side foreign states. If locale $t$ is in the left hemisphere (i.e., $t<0$), $b_{n_t}$ is the point where locale $t$ and its fellow locales start paying trade costs for their imported goods from their right-side foreign states.} 

Now consider the right hemisphere of the world, where state $n$ is the $n$-th nearest state to the WGC. Its left border is $b_{n-1}$, its right border $b_{n}$, its size $S_n = b_n-b_{n-1}$, and its remoteness $R_n$. The following properties of  $R_n$ follow from equation \eqref{eq:remoteness}:
\begin{align}
\frac{{\partial {R_n}}}{{\partial {S_n}}} &=  - \tau (1 - {b_{n - 1}} - {S_n}){R_n}<0, \label{eq:RSeq}\\
\frac{{\partial {R_n}}}{{\partial {b_{n - 1}}}} &= \tau (2{b_{n - 1}} + {S_n}){R_n}>0,\label{eq:Rleftborder}\\
\frac{{\partial {R_n}}}{{\partial \tau }} &= \frac{1}{2}[{(1 + {b_{n - 1}})^2} + {(1 - {b_{n - 1}} - {S_n})^2}]{R_n}>0.\label{eq:Rtau}
\end{align}
According to equation \eqref{eq:RSeq}, $R_n$ decreases if state $n$ increases in size. The state size change may take the form of (a) fixing the left border and pushing the right border rightward, (b) fixing the right border and pushing the left border leftward, or (c) pushing both borders outward. According to equation \eqref{eq:Rleftborder}, $R_n$ decreases if state $n$ moves leftward with its size unchanged. According to equation \eqref{eq:Rtau}, $R_n$ decreases if no border changes but the foreign trade cost per unit of distance ($\tau$) decreases. These properties of $R(t)$ prepare us to solve for the world partition $\{b_n^*\}$ on Date 2.

\paragraph{Solving Date 2.}  On Date 2, the overlords (designated on Date 1, as discussed later) draw borders for themselves and other lords in their states, with perfect foresight of the events on Date 3 (as discussed above). For the lord of locale \( t \), a marginally larger state results in a disutility \( h \) (see utility function \eqref{eq:utilityfn}) and a consumption gain (see \Cref{lemma:suff} and property \eqref{eq:RSeq}).

To solve for the world partition \(\{b_{n}^*\}\), we start from the center of the world and move outward towards its two ends. This procedure yields an equilibrium of the model, which we later prove to be unique. Consider locale \( t=0 \) located at the WGC. It is the most proximal locale in the world, and therefore its lord must be an overlord.\footnote{The lord at the WGC can only take his own offer to be an overlord.} By equation \eqref{eq:remoteness}, it has the lowest possible remoteness. Thus, if locale \( t=0 \) draws its borders to include any other locale in the world as its fellow locale, that locale will attain the lowest possible remoteness, and its utility cannot increase further. To determine the first locale in each hemisphere that is not included, the lord at locale $t=0$ solves problem \eqref{eq:lordoptimization} by imposing
\begin{equation}
b^{L}(t=0)=-b^{R}(t=0)=\frac{S(t=0)}{2}.
\end{equation}
The resulting first-order condition is
\begin{equation}
\tau R_0^{\gamma - 1}(1 - (b_0^* - b_{-0}^*)) = h,
\label{eq:FOC00}
\end{equation}
where $b_{-0}^*$ and $b_0^*$ denote the optimal borders for state 0. Since all locales within state 0 share the same remoteness, $R(t=0)$ applies to the entire state and is denoted by $R_0$. Likewise, $S(t=0)$ applies to the entire state and is denoted by $S_0$.

The locale at $b_0^*$ is excluded from state 0 and its lord therefore forms a state to the right of state 0. This lord solves problem \eqref{eq:lordoptimization} by choosing borders such that
\begin{equation}
b^{L}(t = b_0^*) = b_0^*, \quad b^{R}(t = b_0^*) = b_0^* + S(t = b_0^*).
\end{equation}
This new state, denoted state 1, extends rightward until the following first-order condition holds for the lord:
\begin{equation}
\tau R_1^{\gamma - 1}\bigl(1 - b_0^* - S_1\bigr) = h,
\label{eq:FOC11}
\end{equation}
which pins down state size \( S_1 \) and hence the right border \( b_1^* = b_0^* + S_1 \). The locale at \( b_1^* \) is excluded from state 1. All locales in the interval \( [b_0^*, b_1^*) \) achieve their minimal remoteness by joining state 1, and their utility cannot be further increased. These locales therefore share \( R_1 = R(t=b_0^*) \) and \( S_1 = S(t=b_0^*) \). 

State 1 is the first state in the right hemisphere. For state \( n \ge 1 \), the locale at \( b_{n-1}^* \) is excluded from state \( n-1 \), and its lord forms a state to the right of state \( n-1 \). The lord solves problem~\eqref{eq:lordoptimization} by choosing borders
\begin{equation}
b^{L}(t = b_{n-1}^*) = b_{n-1}^*, \qquad 
b^{R}(t = b_{n-1}^*) = b_{n-1}^* + S(t = b_{n-1}^*).
\end{equation}
The resulting first-order condition is
\begin{equation}
\tau R_n^{\gamma - 1}\bigl(1 - b_{n-1}^* - S_n\bigr) = h,
\label{eq:FOC}
\end{equation}
which pins down state size \( S_n \) for state \( n \) and the right border \( b_n^* = b_{n-1}^* + S_n \). The locale at \( b_n^* \) is excluded from state~\( n \). All locales in the interval \( [b_{n-1}^*, b_n^*) \) achieve their minimal remoteness by joining state \( n \), and therefore share the same values \( R_n \) and \( S_n \) as the locale at \( b_{n-1}^* \). This solving process applies symmetrically to the left hemisphere. 

In equilibrium, all borders in the world (i.e., the partition $\{b_n^*\}$) are determined. The total number of states is $2N+1$, where $2N$ satisfies\footnote{When state 0 collapses to a stateless point, namely \(b_0^* = b_{-0}^* = 0\), the number of states is $2N$ rather than $2N+1$. A necessary condition for this extreme scenario to occur is \(\tau \exp(\tau (\gamma-1)) = h\) (see first-order equation \eqref{eq:FOC00}). There is no particular reason for this parameter combination to arise, and we therefore exclude this special case from consideration. Nonetheless, in this special case, the first-order condition \eqref{eq:FOC} still holds, as does the rest of the equilibrium.}
\begin{equation}
2N
= \left\{
2n :
\frac{S_0}{2} + \sum_{i=1}^{n} S_i \le 1
\;\text{and}\;
\frac{S_0}{2} + \sum_{i=1}^{n+1} S_i > 1
\right\}.
\label{eq:stateno}
\end{equation}
This state system leaves very distal locales outside any state---locales in the intervals
$[-1, b_{-N}]$ and $[b_N, 1]$ are not incorporated into their proximal states. At each pole, the remaining locales have incentives to form their own states in order to eliminate foreign trade costs among themselves. These two polar states, unlike the $2N+1$ interior states, are smaller than the optimal size; we therefore refer to them as \textit{polar semi-states}. They serve to maintain a stable number of states in the linear world.

Thus far, a world partition $\{b_n^*\}$ has been solved for. One might wonder whether other world partitions could emerge if borders were determined in a different order, a question we address below.

\paragraph{Solving Date 1.} The world partition $\{b_n^*\}$ determined on Date~2 consists of a set of critical coordinates: locales at these coordinates become the most proximal locales within their respective states. The lords at these locales act as overlords, drawing borders for themselves and other lords in their states. These overlords are designated on Date~1 by lords who have perfect foresight regarding Date~2 outcomes.

Belonging to $\{b_n^*\}$ is both necessary and sufficient for being an overlord. Sufficiency follows because none of the lords at these locales can identify a more proximal locale to serve as an overlord, as shown above, and therefore must serve as their own overlords. Necessity follows from the fact that no overlords exist outside the set of locales $\{b_n^*\}$, a claim we explain below and prove later as part of a proposition.

First and foremost, the composition of state~0 is determined by utility maximization. Consequently, regardless of the starting point used to solve for borders, the overlord at locale $t=0$ selects the state $(b_{-0}^*, b_0^*)$. All locales within this interval find state~0 optimal, since belonging to the same state as locale $t=0$ minimizes their remoteness and attains the global minimum of remoteness. As a result, no other overlord exists within $(b_{-0}^*, b_0^*)$.

All other states formed by the partition $\{b_n^*\}$ likewise achieve the lowest possible remoteness for their constituent locales, given the locations of those locales. Equivalently, any alternative partition of the world, denoted by $\{\tilde b_k\}$, would generate strictly higher remoteness for at least some locales relative to the allocation induced by $\{b_n^*\}$. To see why, consider the right hemisphere and suppose there exists some $k \ge 1$ such that $\tilde b_n = b_n^*$ for $n = 0,1,\ldots,k-1$, but $\tilde b_k \neq b_k^*$.\footnote{The same argument applies symmetrically to the left hemisphere.} \Cref{fig:uniqueness} illustrates why such a configuration for state $k$ cannot constitute an equilibrium partition. If $\tilde b_k < b_k^*$, corresponding to the middle panel, locales in the $\bigstar$ region are strictly worse off than under the equilibrium partition, since their remoteness would depend on $\tilde b_k$ rather than on $b_{k-1}^*$. Moreover, the lord at locale $\tilde b_k$ would prefer to join state $k$ in the upper panel—where the most proximal locale is $b_{k-1}^*$—rather than to rule over state $k+1$, where the most proximal locale would be his own.\footnote{Even if the lord at $\tilde b_k$ were to accept offers from locales in the $\bigstar$ region, such acceptance would not make him an overlord, since those locales would also make offers to $t = b_{k-1}^*$, who would accept them into his state. Recall that when multiple accepted offers exist, only the most proximal one is effective.} Consequently, no overlord can arise between $b_{k-1}^*$ and $b_k^*$. Conversely, if $\tilde b_k > b_k^*$, as illustrated in the lower panel, the resulting state would be too large for the overlord at $\tilde b_{k-1}$, who determines $\tilde b_k$ for state $k$. The optimal size of state $k$ is pinned down by the first-order condition \eqref{eq:FOC}, which would be violated in this case. Therefore, in either scenario, the proposed state $k$ cannot be part of an equilibrium partition.

\begin{figure}[h!]
\centering
\caption{State $k$ is Not Part of Any Other Equilibrium Partition}
\label{fig:uniqueness}
\hspace*{0cm}\includegraphics[scale=.8,clip]{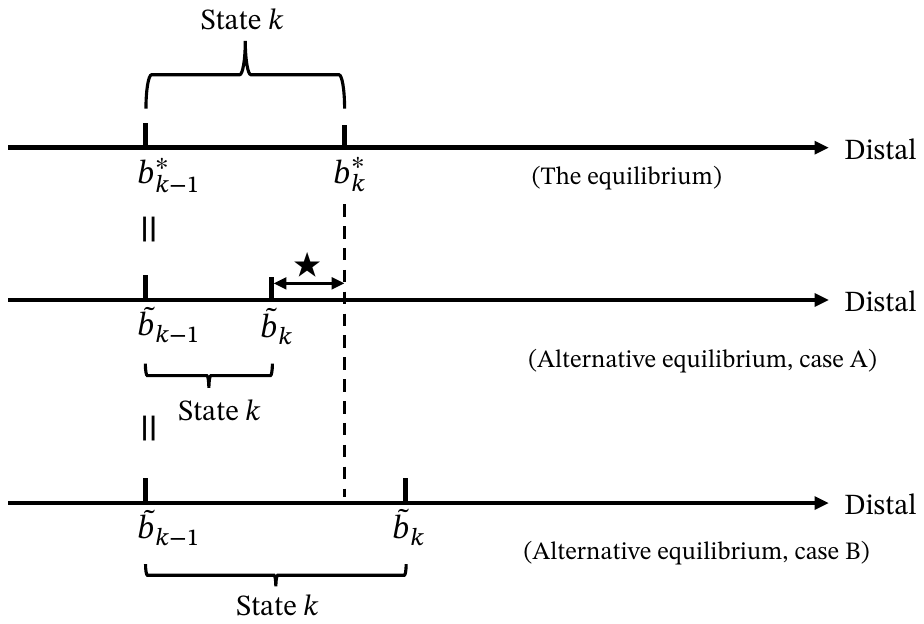}
\vspace{-400pt}
{ \par \raggedright \footnotesize \textit{Notes:}  The figure is concerned with the right hemisphere. The state $k$ in the upper panel, formed by borders $b_{k-1}^*$ and $b_k^*$, is a state in the equilibrium partition we described in the text. Suppose that another equilibrium partition exists. In this alternative equilibrium partition, state $k$ has the same left border $\tilde{b}_{k-1}=b_{k-1}^*$ but a different right border $\tilde{b}_k$ than $b_k^*$. (The smallest possible $k$ is 1; that is, $\tilde{b}_{k-1}=b_{k-1}^* =b_0^*$). $\tilde{b}_k$ is either smaller than $b_k^*$ (the middle panel) or greater than $b_k^*$ (the lower panel). In the middle panel, locales in the $\star$ region have a higher remoteness than their counterparts in the upper panel. In the lower panel, the size of state $k$ is too large for the overlord at locale $\tilde{b}_{k-1}$. Thus, in either case, state $k$ is not in equilibrium. \par}
\end{figure}

Since all overlords belong to $\{b_n^*\}$ and no other overlords exist, the world partition we have identified is the only equilibrium partition. These overlords accept offers from themselves and from some of their distal neighbors. We can formally prove:
\begin{proposition}\label{prop:existunique}
There exists a unique equilibrium world partition.
\end{proposition}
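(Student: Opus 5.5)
The proof has two parts: existence by construction, then uniqueness by induction on states moving outward from the WGC. The equilibrium partition is the sequence $\{b_n^*\}$ produced by the Date~2 recursion. For existence, I first show that each first-order condition has a unique solution in the relevant range. For state~0, define $F_0(S)=\tau R_0(S)^{\gamma-1}(1-S)$ with $R_0(S)=\exp(\tau(1-S/2)^2)$, using the symmetric borders $b_{-0}=-S/2$, $b_0=S/2$. By \eqref{eq:RSeq}, $R_0$ is decreasing in $S$. Since $\gamma>1$, $F_0$ is then a product of positive decreasing factors, so it is strictly decreasing, and it falls to $0$ as $S\to 1$. Hence \eqref{eq:FOC00} has at most one root. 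If $F_0(0)=\tau e^{\tau(\gamma-1)}>h$, that root is interior. Otherwise there is a corner solution, and the special case excluded in the footnote handles the boundary.

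The same argument covers state $n\ge1$ with $b_{n-1}^*$ given. Let $G_n(S)=\tau R_n(S)^{\gamma-1}(1-b_{n-1}^*-S)$. By \eqref{eq:RSeq}, $G_n$ is strictly decreasing in $S$ and reaches $0$ at $S=1-b_{n-1}^*$. So \eqref{eq:FOC} has a unique root $S_n>0$ whenever $G_n(0)>h$. When $G_n(0)\le h$, or when the root would overshoot $1$, the remaining interval becomes a polar semi-state. Together with \eqref{eq:stateno}, this yields a finite, well-defined partition and fixes $N$.

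I also need to check criterion (ii). Take any deviation $(\widehat b^L(t),\widehat b^R(t))$ that raises $U(t)$. For a non-overlord lord, moving his own borders cannot lower remoteness, because his state already attains the minimal $R$ available given the proximal border $b_{n-1}^*$. This follows from \eqref{eq:Rleftborder} and the fact that the proximal border is pinned by the previous state. The only way he gains is by enlarging or shifting the state. That imposes on the overlord a size that violates \eqref{eq:FOC}, strictly lowering the overlord's utility. Alternatively, it moves the border into the neighboring state, raising the remoteness of some locale in that state. Either way some $t'$ is harmed, as criterion (ii) requires.

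For uniqueness, suppose $\{\tilde b_k\}\ne\{b_n^*\}$ is another equilibrium partition. I first show state~0 must coincide. The locale $t=0$ is necessarily an overlord. By the strict monotonicity of $F_0$, his unique optimum is $(b_{-0}^*,b_0^*)$. Any other state~0 either violates his first-order condition, which is a profitable deviation harming nobody inside, or excludes locales that would strictly gain by joining without hurting the lord at $0$. Next, let $k$ be the first index in the right hemisphere with $\tilde b_k\ne b_k^*$; the left hemisphere is symmetric. I then formalize the two cases of \Cref{fig:uniqueness}.

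\emph{Case $\tilde b_k<b_k^*$.} Locales in $[\tilde b_k,b_k^*)$ would have proximal border $\tilde b_k>b_{k-1}^*$. By \eqref{eq:Rleftborder} and \eqref{eq:RSeq}, their remoteness is strictly higher than $R_k$. Moving them into state $k$ raises their utility, and it also raises the overlord's utility because $G_k$ is decreasing and still above $h$ on $[\tilde b_k-b_{k-1}^*,S_k)$. This contradicts criterion (ii).

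\emph{Case $\tilde b_k>b_k^*$.} The overlord at $b_{k-1}^*$ gains by shrinking toward $S_k$, because $G_k<h$ beyond $S_k$. The excluded locales then form a new state starting at $b_k^*$, which under the Date~1 offer rule they weakly prefer, so no one is harmed. Again criterion (ii) fails.

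Induction on $k$ then gives $\tilde b_k=b_k^*$ for all $k$, and the count $N$ follows from \eqref{eq:stateno}.

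I expect the main obstacle to be the second uniqueness case, together with the polar semi-states. Showing that the deviation harms no one requires tracking how shrinking state $k$ shifts the proximal borders, and hence the remoteness, of all more distal states. Those states get a new proximal border $b_k^*<\tilde b_k$, so by \eqref{eq:Rleftborder} their remoteness falls. I would verify this comparative static explicitly, so that the deviation is weakly Pareto-improving for every other lord and strictly improving for the overlord.
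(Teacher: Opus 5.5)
Your architecture is the paper's: existence by the outward recursion, and uniqueness by taking the first $k$ with $\tilde b_k\neq b_k^*$ and splitting into $\tilde b_k<b_k^*$ and $\tilde b_k>b_k^*$. Showing that $F_0$ and $G_n$ are strictly decreasing is a real addition, since it gives each overlord a unique optimum, which the paper uses without comment.

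The gap is in how you close the uniqueness cases. You look for deviations that harm no one, and the step you rely on for $\tilde b_k>b_k^*$ is false. Let $V(b,S)=\frac{1}{1-\gamma}R(b,S)^{\gamma-1}-hS$ be the utility of a member of a right-hemisphere state with proximal border $b$ and size $S$. By \eqref{eq:Rleftborder}, $V$ is strictly decreasing in $b$ at fixed $S$, so $\max_S V(b_k^*,S)<\max_S V(b_{k-1}^*,S)=V(b_{k-1}^*,S_k)$. The locales in $[b_k^*,\tilde b_k)$ start with utility $V(b_{k-1}^*,\tilde b_k-b_{k-1}^*)$. When $\tilde b_k$ is close to $b_k^*$, this exceeds anything attainable in a state whose proximal border is at least $b_k^*$. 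So these locales are strictly harmed, whether they form a new state or join state $k+1$.

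Your other steps fail the same test. For $\tilde b_k<b_k^*$, the lords left in state $k+1$ have their proximal border pushed outward. If that state satisfies its own first-order condition, their utility falls at rate $\tau\tilde R_{k+1}^{\gamma-1}(\tilde b_k+\tilde b_{k+1})$. For state~$0$, shrinking an oversized state hurts the locales it expels, who were inside it. So the weak Pareto improvement promised in your last paragraph cannot be established. Indeed, under a literal ``harm nobody'' reading of criterion (ii), take a partition equal to $\{b_n^*\}$ except that state $k$ is slightly oversized, with the distal states re-optimized. One can check that it admits no deviation leaving every lord weakly better off, so uniqueness would fail under that reading.

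The paper never needs unanimity. Its gloss on criterion (ii) gives two blocking rules. A state fails if excluding a lord improves the welfare of the \emph{other lords in that state}. It also fails if a foreign lord wants to join and admitting him harms no \emph{existing} member. Consistently, the Date~1--2 structure lets the overlord at $b_{k-1}^*$ set state $k$'s borders unilaterally.

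For $\tilde b_k>b_k^*$, it therefore suffices that this overlord strictly gains by shrinking to $S_k$, which is your own first sentence there. What the expelled locales lose is irrelevant. For $\tilde b_k<b_k^*$, the locales in $[\tilde b_k,b_k^*)$ make offers to $b_{k-1}^*$, who accepts because $G_k>h$ on that range. By the Date~1 rule that the most proximal accepting lord becomes overlord, the lord at $\tilde b_k$ cannot rule them. This is exactly the appendix argument, and the same unilateral logic handles state~$0$.

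Replace your ``no one is harmed'' claims with these arguments and drop the planned Pareto check. Your proof then coincides with the paper's, with your monotonicity lemma supplying the uniqueness of each overlord's optimum.
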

\begin{proof}
Existence of the equilibrium partition follows from the earlier solving process, starting from the WGC and moving outward. Uniqueness of the equilibrium partition is established in Appendix~\ref{uniqueproof}.
\end{proof}

Finally, given $\{b_n^*\}$, consumption choices $C^z(t)$ and $C^l(t)$ are determined, completing the unique equilibrium $\Omega$.

We have established the existence and uniqueness of the equilibrium. We refer to the model developed so far as the baseline model. In the next three sections, we apply the baseline model to three thematic areas: political economy, political security, and political ideology.

\section{Political Economy\label{sec:polecon}}

\subsection{Interstate trade}

The empirical trade literature finds that interstate trade typically follows a gravity pattern: two states exhibit larger bilateral trade volumes when they are larger in size and closer in distance. To rationalize this empirical regularity, we derive a gravity equation from our model:
\begin{proposition}\label{prop:gravity}
The exports from state $m$ to state $n$ follow 
\begin{equation}
X_{m,n} = \zeta {S_m}{S_n}\exp (  - \tau D_{m,n}), \label{eq:grav}
\end{equation}
where $\zeta$ is a positive scalar and $D_{m,n} \equiv \min_{\,s \in m, t \in n} d(t,s)$ is the shortest locale-to-locale distance between the two states.
\end{proposition}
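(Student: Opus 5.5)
The plan is to compute bilateral exports directly from the Date~3 equilibrium objects. First, aggregate the origin-$s$ flows into state $n$ and over origins $s \in m$. Then show that the only term depending on the pair $(m,n)$, beyond the two measures $S_m$ and $S_n$, is a single iceberg factor evaluated at the closest pair of border points. I would measure exports $X_{m,n}$ at factory-gate value, i.e.\ the value $p\,c(t,s)$ of the quantity that actually arrives, summed as $X_{m,n}=\int_{t\in n}\int_{s\in m} p\,c(t,s)\,ds\,dt$.

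\emph{Step 1 (expenditure per origin is a constant).} From \eqref{eq:aggC}, consumers at $t$ spend the same amount $\kappa(t)$ on every origin's good, inclusive of the trade cost they bear. Integrating over the $2$ units of origins gives total expenditure $2\kappa(t)$. Under the budget constraints in \eqref{eq:laboroptimization}--\eqref{eq:lordoptimization}, this total equals locale income $r z + w l$. By \eqref{eq:factorincomeZ}--\eqref{eq:allprices}, that income equals $\kappa$ at every locale. Hence $\kappa(t)=\kappa/2$ for all $t$, independent of the partition.

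\emph{Step 2 (from spending to arriving value).} Consumers pay $p\,d(t,s)\,c(t,s)$ for $c(t,s)$ delivered units. The factory-gate value of what reaches $t$ is therefore $p\,c(t,s)=\frac{\kappa}{2}\,d(t,s)^{-1}$. What remains is to evaluate $d(t,s)$ for $t\in n$ and $s\in m$ with $m\neq n$.

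\emph{Step 3 (main obstacle: the relevant distance).} Read literally, \eqref{eq:ddef} gives $d(t,s)=\exp(\tau|s-\tilde b|)$, where $\tilde b$ is $n$'s border nearest to $s$. Integrating this over $s\in m$ would give $(1-e^{-\tau S_m})/\tau$ rather than $S_m$. The key argument is that domestic trade within the origin state is also free ($a=0$ inside $m$). Competitive firms in $m$, or equivalently arbitrage, will therefore move any good produced in $m$ costlessly to $m$'s border facing $n$ before it crosses into foreign territory. Symmetrically, the $\inf$ in \eqref{eq:ddef} already lets goods enter $n$ at its border closest to $m$ and then travel freely. The effective cost is then $\exp(\tau D_{m,n})$ for every pair $(t,s)\in n\times m$, where $D_{m,n}$ is the border-to-border distance, i.e.\ the shortest locale-to-locale distance between the two states as in the statement. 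I would verify this by checking the two orientations: $m$ distal to $n$ in the same hemisphere, and $m$, $n$ in opposite hemispheres with intervening states. In both cases the gap equals $|b-b'|$ for the facing borders. Since the physical cost $g$ is additive in distance and the political cost applies only outside the origin and destination states, nothing else enters.

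\emph{Step 4 (aggregate).} With a constant integrand, $X_{m,n}=\int_{n}\int_{m}\frac{\kappa}{2}e^{-\tau D_{m,n}}\,ds\,dt=\frac{\kappa}{2}S_mS_n e^{-\tau D_{m,n}}$. This is \eqref{eq:grav} with $\zeta=\kappa/2>0$, a scalar invariant to the partition by the observations after \eqref{eq:allprices}. The remaining care point is to state explicitly that the normalization $p=\frac{z}{2}r$ from \Cref{lemma:suff} fixes $\kappa$ but does not affect its positivity.
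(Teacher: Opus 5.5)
The gap is in Step 3. Steps 1, 2 and 4 are fine and match the paper: you use $\kappa(t)=\kappa/2$ and value delivered units at the factory-gate price $p$.

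The problem is your ``arbitrage'' claim that goods move free inside the origin state. The cost in \eqref{eq:ddef} takes the infimum only over the consumer's state $n_t$. Nothing in the model lets a good travel costlessly to its own state's facing border before crossing, and the paper relies on the literal form. The remoteness formula \eqref{eq:remoteness} charges $\tau(b_{n_t-1}-s)$ for each origin $s$, which is the distance from the origin locale itself, not from its state's border. That formula drives \Cref{lemma:suff} and the first-order conditions that determine $S_m$, $S_n$ and $D_{m,n}$. Replacing $d$ by a state-to-state cost $\exp(\tau D_{m,n})$ therefore changes the model and its partition; it does not derive a result within it. The rationale also proves too much. If goods could be shipped domestically for free inside the origin state before export, the same logic would let them cross every intervening state domestically for free, and interstate trade costs would disappear.

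Under the model's actual cost, the calculation you set aside in Step 3 is the correct one: $X_{m,n}=\frac{\kappa}{2\tau}S_n e^{-\tau D_{m,n}}\bigl(1-e^{-\tau S_m}\bigr)$. This is not exactly $\zeta S_mS_ne^{-\tau D_{m,n}}$ for any pair-invariant $\zeta$, because $(1-e^{-\tau S})/(\tau S)$ varies with state size and state sizes differ across states.

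The missing idea is the small-state linearization $1-e^{-\tau S}\approx\tau S$, which is how the paper closes its proof. The paper places $m$'s output at its facing border $b_m$ and integrates $e^{-\tau(s-b_m)}$ over $s\in[b_{n-1},b_n)$, giving $\frac{\kappa}{2\tau}S_m e^{-\tau D_{m,n}}\bigl(1-e^{-\tau S_n}\bigr)$. It then sets $1-e^{-\tau S_n}=\tau S_n$ on the grounds that states are small.

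Whichever state carries the exponential integral, the gravity form with $\zeta=\kappa/2$ holds only to first order in state size. So prove the proposition as that approximation: keep the literal cost, integrate over origins $s\in m$, and linearize. Do not claim the exact identity of your Step 4.
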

\begin{proof}
See Appendix \ref{gravproof}. 
\end{proof}

\Cref{prop:gravity} connects our model to the gravity model literature.\footnote{See \citet{Anderson11}, \citet{HM14}, and \citet{AAT19} for reviews.} Early gravity models took the form of equation~\eqref{eq:grav}, which is analogous to Newton’s law of universal gravitation. These early formulations were incomplete because they did not account for the differential remoteness of trading partners from the rest of the world. Trade between two states depends not only on their bilateral distance but also on their distances from all other potential trade partners, since the importance of bilateral distance varies with each state’s overall geographic position. To capture these general equilibrium effects, gravity models developed over the past two decades incorporate explicit remoteness terms, commonly referred to as multilateral resistance. Our gravity equation embeds these general equilibrium forces while preserving the original Newtonian structure. 

Since borders are not fixed in our model, the gravity equation \eqref{eq:grav} can be used to analyze how parameters influence the interaction between interstate trade and state borders. In this sense, our gravity equation represents a long-run relationship, whereas the gravity equations in the existing literature are short-run. Our long-run perspective, which endogenizes both trade and borders, delivers the following implications. Below, we use $\hat v = dv/v$ to denote the percentage change in any variable $v$.

First, consider a decrease in \(\tau\) such that \(d\tau<0\). Its impact on bilateral trade volume \(X_{m,n}\) can be decomposed into three effects:
\begin{equation}
\underbrace{\hat{X}_{m,n}}_{\lesseqgtr 0}
=
\underbrace{\hat{S}_m + \hat{S}_n}_{\text{size effect}<0}
\underbrace{-D_{m,n}\, d\tau}_{\text{direct effect}>0}
\underbrace{-\tau\, dD_{m,n}}_{\text{location effect}>0}.
\label{eq:grav4}
\end{equation}
Among the three effects in \eqref{eq:grav4}, the \textit{direct effect} is self-explanatory. The \textit{size effect} captures the fact that both states shrink in size when \(\tau\) decreases and smaller states trade less. The net of the direct and size effects is therefore ambiguous, depending on their relative magnitudes. A third \textit{location effect} further contributes to this ambiguity: as a reduction in  \(\tau\) leads to smaller states worldwide, the contraction of the intermediate states between \(m\) and \(n\) shortens the effective distance between them and thus raises trade volume.

Next, consider an increase in \(h\) such that \(dh>0\). Its impact on bilateral trade volume \(X_{m,n}\) can be decomposed into two components:
\begin{equation}
\underbrace{\hat{X}_{m,n}}_{\lesseqgtr 0}
=
\underbrace{\hat{S}_m + \hat{S}_n}_{\text{size effects}<0}
\underbrace{-\tau\, dD_{m,n}}_{\text{location effect}>0}.
\label{eq:grav4b}
\end{equation}
The size effect is negative because state sizes shrink as \(h\) increases. The location effect is positive because locations \(m\) and \(n\) move closer to each other when the intermediate states contract. As a result, the net effect on trade is ambiguous. This is reminiscent of historical periods in which the collapse of empires did not necessarily reduce trade, as the removal of intermediate empires unblocked trade routes and counteracted the decline in trade.

A variant of equation \eqref{eq:grav} can be derived to characterize trade between two fixed areas. Fixed areas are geographic intervals anchored to given coordinates. Fixed areas never move, whereas the borders of a state with a given index may shift so that the state fully or partially contains a fixed area, or is fully contained within it. Trade between fixed areas \(u\) and \(v\) follows
\begin{equation}
X_{u,v} = \zeta F_u F_v \exp\!\left(-\tau D_{m(u),n(v)}\right),
\label{eq:grav_F}
\end{equation}
where \(F_u\) and \(F_v\) denote the sizes of the two fixed areas, and \(D_{m(u),n(v)}\) is the shortest distance between the states containing them. \Cref{fig:fixedareas} illustrates the two fixed areas and the distance over which trade costs apply—namely, the segment between the borders of their respective states that face each other. When \(\tau\) decreases or \(h\) increases, states shrink in size while \(F_u\) and \(F_v\) remain unchanged. The contraction of the intermediate state(s) shortens the cost-ridden distance between the two fixed areas, thereby increasing trade between them. This mechanism operates regardless of whether the two fixed areas lie in the same hemisphere or in different hemispheres. Technically, when fixed areas rather than states are the trading entities, the size effects in equations \eqref{eq:grav4}--\eqref{eq:grav4b} are absent, so the trade response has an unambiguous sign.

\begin{figure}[h!]
\centering
\caption{Trade between Two Fixed Areas when States Move}
\label{fig:fixedareas}
\hspace*{0cm}\includegraphics[page=2,scale=.7,clip]{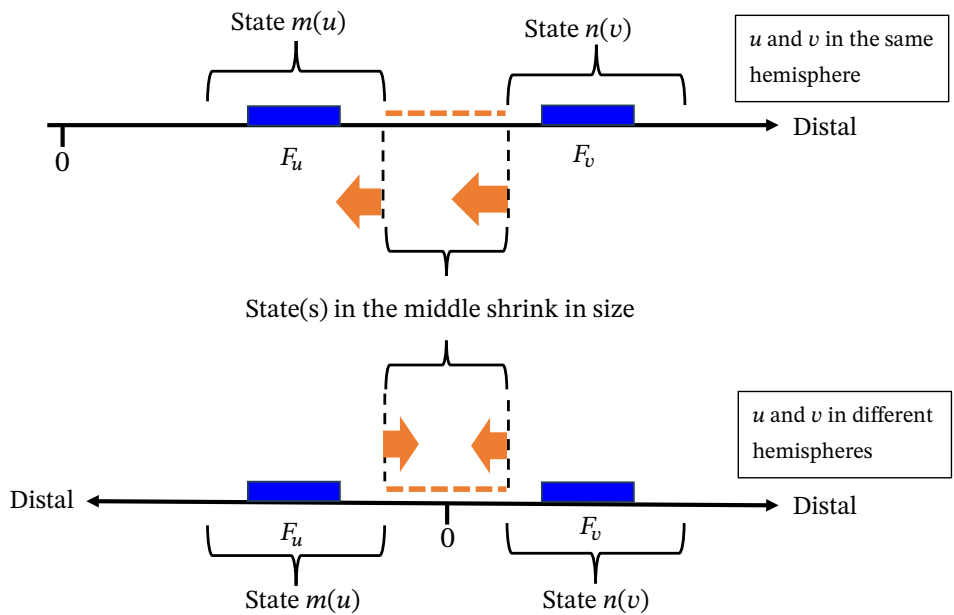}
\vspace{-300pt}
{ \par \raggedright \footnotesize \textit{Notes:} Trade costs between fixed areas \(F_u\) and \(F_v\) apply only beyond their states’ mutual border—namely, along the orange dashed segment. As \(\tau \downarrow\) or \(h \uparrow\), this segment shortens because the intermediate states contract in size. This result holds whether the two fixed locations lie in the same hemisphere (top panel) or in different hemispheres (bottom panel). \par}
\end{figure} 

Notice that if \(\tau\) continues to decrease (or \(h\) continues to increase), the state affiliation of the two fixed areas will change, causing \(X_{u,v}\) to fluctuate with the movement of the relevant borders. As this process continues, the states overlapping with the two fixed areas, as well as the intermediate states between them, eventually become infinitesimal. Throughout this process, \(F_u\) and \(F_v\) in equation \eqref{eq:grav_F} remain unchanged, because locale economies have invariant economic sizes under the Cobb--Douglas production and consumption structures. All fluctuations in trade arise from the location effect, which gradually vanishes as \(D_{m(u),n(v)} \to D_{u,v}\). In the limit, the physical distance between the two fixed areas is occupied by a continuum of infinitesimal states, so that trade costs apply along the entire distance. Consequently, the fixed-area gravity equation \eqref{eq:grav_F} converges to a state-free gravity equation: $X_{u,v} = \zeta F_u F_v \exp(-\tau D_{u,v})$.

\subsection{Interstate migration}
Just as goods are traded across states, production factors may also migrate across borders. In our model, labor has incentives to migrate between states in response to real wage differentials. Borders introduce frictions to labor mobility analogous to those affecting interstate trade in goods. In this subsection, we model interstate migration by allowing labor to move freely between adjacent states. 

Consider two adjacent states $m$ and $n$ in the right hemisphere, where state $m$ lies farther from the WGC than state $n$ (so $ m=n+1$). Suppose that the border between them no longer restricts labor mobility. Aggregating equation~\eqref{eq:factorincomeL} to the state level yields
\begin{equation}
w_n L_n \;\equiv\; \int_{b_{n-1}}^{b_n} w(t) \, l(t)\, dt = \int_{b_{n-1}}^{b_n} (1-\alpha)\kappa \, dt = (1-\alpha)\kappa S_n,\label{eq:statelaborpay}
\end{equation}
for state $n$, where $L_n$ denotes total labor employment in state $n$. An analogous expression holds for state $m$. Let $w_n$ and $w_m$ denote post-migration nominal wages. Free migration between the two states implies real wage equalization:
\begin{equation}
\frac{w_n}{R_n} = \frac{w_m}{R_m}.\label{eq:realwageequal}
\end{equation}
Because the model implies a lower initial real wage in state $m$ than in state $n$, labor migrates from state $m$ to state $n$. Combining equations~\eqref{eq:statelaborpay} and~\eqref{eq:realwageequal}, we obtain
\begin{equation}
\frac{S_n}{S_m} \cdot \frac{L_m - M_{m,n}}{L_n + M_{m,n}}=\frac{R_n}{R_m},\label{eq:migrationflowprepare}
\end{equation}
where $M_{m,n}$ denotes the migration flow from state $m$ to state $n$.

Normalize the initial labor supply of each locale $s$ (i.e., $l^0(s)$) to one, then we have:
\begin{proposition}\label{prop:migration}
The equilibrium migration flow from state $m$ to its proximal adjacent state $n$ is given by
\begin{equation}
M_{m,n}=S_m S_n \frac{\Phi_n-\Phi_m}{\Phi_n+\Phi_m},\label{eq:migrationflow}
\end{equation}
where
\[
\Phi_k=(1-b_k)^{\frac{1}{\gamma-1}},
\qquad k\in\{m,n\}.
\]
\end{proposition}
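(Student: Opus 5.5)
The plan is to solve the real-wage equalization condition \eqref{eq:migrationflowprepare} explicitly for $M_{m,n}$, and then to eliminate the remoteness terms $R_n,R_m$ using the equilibrium first-order condition \eqref{eq:FOC}. No new economics is needed: \eqref{eq:migrationflowprepare} already encodes the labor-market clearing \eqref{eq:statelaborpay} and the arbitrage condition \eqref{eq:realwageequal}.

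\emph{Step 1 (initial employment).} With $l^0(s)=1$ for every locale, \eqref{eq:equilabor} gives pre-migration employment $L_k=\int_{b_{k-1}}^{b_k}1\,ds=S_k$ for $k\in\{m,n\}$.

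\emph{Step 2 (solve for the flow).} Substitute $L_k=S_k$ into \eqref{eq:migrationflowprepare}. Cross-multiplying gives the linear equation
\[
S_n(S_m-M_{m,n})R_m=S_m(S_n+M_{m,n})R_n ,
\]
whose solution is
\[
M_{m,n}=\frac{S_mS_n\,(R_m-R_n)}{S_nR_m+S_mR_n}.
\]

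\emph{Step 3 (remoteness from the first-order condition).} For every interior state $k$, \eqref{eq:FOC} reads $\tau R_k^{\gamma-1}(1-b_{k-1}^*-S_k)=h$. Since $b_{k-1}^*+S_k=b_k^*$, this is $\tau R_k^{\gamma-1}(1-b_k)=h$. Hence
\[
R_k=\Bigl(\tfrac{h}{\tau}\Bigr)^{\frac{1}{\gamma-1}}(1-b_k)^{-\frac{1}{\gamma-1}}=\frac{c}{\Phi_k},
\]
where $c$ is a common constant that cancels from any ratio. This is where the definition $\Phi_k=(1-b_k)^{1/(\gamma-1)}$ comes from. It also yields the sign claim: $m=n+1$ lies distally, so $b_m>b_n$, hence $\Phi_m<\Phi_n$, hence $R_m>R_n$. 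The flow is therefore from $m$ to $n$ and is positive, as the text asserts.

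\emph{Step 4 (substitution and the expected obstacle).} Substituting $R_k=c/\Phi_k$ into the expression from Step 2 and multiplying through by $\Phi_m\Phi_n/c$ gives
\[
M_{m,n}=\frac{S_mS_n(\Phi_n-\Phi_m)}{S_n\Phi_n+S_m\Phi_m}.
\]
The numerator matches \eqref{eq:migrationflow}, but the denominator carries the weights $S_n,S_m$. Reaching exactly $\Phi_n+\Phi_m$ is the step I expect to be the main obstacle.

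I would first recheck whether the intended normalization measures employment per unit of land, i.e.\ $L_k$ expressed relative to $S_k$, so that the weights drop out. Otherwise I would look for an identity from the recursive structure of \eqref{eq:FOC} linking $S_n\Phi_n$ and $S_m\Phi_m$ to $\Phi_n$ and $\Phi_m$ up to a common factor. Failing both, the stated formula would hold only as written with the weighted denominator, or in the equal-size case $S_m=S_n$ up to rescaling, and I would record that qualification.
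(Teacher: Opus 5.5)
Your derivation follows the paper's own route: the appendix proof substitutes $R_n/R_m=\Phi_m/\Phi_n$ from \eqref{eq:FOC} into \eqref{eq:migrationflowprepare} with $L_k=S_k$ and then ``rearranges.'' Your algebra in Steps 1--4 is correct. The obstacle you anticipate in Step 4 is not a gap in your argument but an error in the statement. The condition is linear in $M_{m,n}$ with coefficient $S_n\Phi_n+S_m\Phi_m>0$, so its unique solution is
\[
M_{m,n}=\frac{S_mS_n(\Phi_n-\Phi_m)}{S_n\Phi_n+S_m\Phi_m}.
\]
The paper's formula is what one obtains by solving $(S_nS_m-M)/(S_nS_m+M)=\Phi_m/\Phi_n$, that is, by replacing $S_nM$ and $S_mM$ with $M$ after cross-multiplying.

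Do not pursue your first two fallbacks; neither can succeed. With $L_k=S_k$, \eqref{eq:migrationflowprepare} is invariant under scaling $(S_n,S_m,M)$ by a common factor. Any solution is therefore homogeneous of degree one in the sizes, whereas the stated formula is of degree two, and no normalization inside the model changes that. Nor can an identity from the recursion help. For $n\ge1$ both states lie in $[0,1]$, so $S_n,S_m<1$, and since $\Phi_k>0$ we get $S_n\Phi_n+S_m\Phi_m<\Phi_n+\Phi_m$. Because $\Phi_n>\Phi_m$, the stated expression strictly understates the true flow for every such adjacent pair. A numerical check of the rearrangement confirms this: $S_n=0.2$, $S_m=0.3$, $\Phi_n=1$, $\Phi_m=0.5$ gives $3/35$ from your formula, which satisfies \eqref{eq:migrationflowprepare}, versus $0.02$ from the stated one, which does not.

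Your third fallback is the right resolution: prove and record the size-weighted formula. The sign conclusion and the reading of $\Phi_n-\Phi_m$ as the real-wage gap survive, but the normalizing term is $S_n\Phi_n+S_m\Phi_m$. Be precise about the equal-size case as well. With $S_m=S_n=S$ your formula gives $S(\Phi_n-\Phi_m)/(\Phi_n+\Phi_m)$, which differs from the stated expression by the factor $S$, so the statement is not exactly recovered even there.

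One smaller point: Step 3 applies \eqref{eq:FOC}, which covers only states $k\ge1$. If the proximal state is $n=0$, you need $R_0\propto 1/\Phi_0$ separately. Differentiating $R_0=\exp(\tau(1-b_0)^2)$ with respect to $S_0=2b_0$ gives $\tau R_0^{\gamma-1}(1-b_0)=h$, so this does hold. It does not hold if you take \eqref{eq:FOC00} as displayed, whose factor is $1-2b_0$.
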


\begin{proof}
See \Cref{migrationproof}.
\end{proof}

\Cref{prop:migration} delivers three implications. First, the terms $\Phi_n$ and $\Phi_m$ capture the real wage differential between two adjacent states. Importantly, holding $S_n$ and $S_m$ fixed, a large real wage differential alone does not guarantee a large migration flow. The differential is normalized by the overall locational advantages of the two states, given by $\Phi_n+\Phi_m$. As a result, for two states that are remote from the rest of the world, a smaller real-wage differential can induce a migration flow of the same magnitude than between two states located close to the WGC.

Second, equation~\eqref{eq:migrationflow} fully accounts for the production and consumption adjustments induced by migration from state $m$ to state $n$. Because production and consumption are equalized across locales within each state, migrants from state $m$ are evenly distributed across locales in state $n$. The resulting changes in labor supply affect nominal wages in the two states:
\[
w_n^{new} < w_n^{old}, \qquad w_m^{new} > w_m^{old},
\]
and the prices of the locale-level output of the two states adjust accordingly:
\[
p(s\in n)^{new} < p(s \in n)^{old}, \qquad p(s\in m)^{new} > p(s \in m)^{old}.
\]
However, under Cobb--Douglas production and consumption structures, locale-level output and consumption remain unchanged in both states. In fact, apart from the induced input and output price adjustments, all other aspects of the equilibrium remain unchanged.

Third, equation~\eqref{eq:migrationflow} continues to hold even when states  \(m\) and \(n\)  are not adjacent. In this case, the derivation above still applies, and the migration flow remains \(M_{m,n}\). That said, distance-related migration frictions may become relevant, since moving to a more distant state is plausibly more costly than moving to a nearby one. Incorporating a distance-related term would render equation~\eqref{eq:migrationflow} fully gravity-like. However, we believe that interstate migration is much more limited in scope than interstate trade; accordingly, we do not pursue this extension and retain equation~\eqref{eq:migrationflow} in its current half-gravity form.

\subsection{Border effect}
A given border matters relatively more for the state on its geographically disadvantaged side. This asymmetry, observed as an empirical regularity and known as the border effect, has been widely documented in the international trade and macroeconomics literatures \citep{McCallum95, ER96, Helliwell98, AVW04}. Our model rationalizes the border effect in a parsimonious way, without explicitly modeling trade volumes or prices:
\begin{proposition}\label{prop:sensitivity}
Between two adjacent states, the shared border matters more for the distal state than for the proximal state, and this asymmetry increases with the states’ distance from the WGC.
\end{proposition}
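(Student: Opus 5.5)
We formalize ``how much a border matters'' as the sensitivity of a state's remoteness, equivalently its lords' real income $1/R$ by \Cref{lemma:suff}, to a marginal shift of that border. Work in the right hemisphere with adjacent states $n$ (proximal) and $n+1$ (distal), sharing border $b_n$. Using equation \eqref{eq:remoteness}, $\ln R_n = \frac{\tau}{2}[(1+b_{n-1})^2+(1-b_n)^2]$ and $\ln R_{n+1} = \frac{\tau}{2}[(1+b_n)^2+(1-b_{n+1})^2]$. Holding the other borders fixed, we compute the semi-elasticities with respect to the shared border:
\begin{equation*}
\left|\frac{\partial \ln R_n}{\partial b_n}\right| = \tau(1-b_n), \qquad \left|\frac{\partial \ln R_{n+1}}{\partial b_n}\right| = \tau(1+b_n).
\end{equation*}
For the proximal state, the border is its distal edge, where trade costs to the rest of the right-side world begin. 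For the distal state, the border is its proximal edge, where costs to everything on the left, including the WGC and the whole left hemisphere, begin.

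The first claim then follows from $b_n>0$ in the right hemisphere, which gives $\tau(1+b_n)>\tau(1-b_n)$. The distal state's remoteness, and hence its lords' consumption, reacts more strongly to the shared border than the proximal state's does. The second claim uses the gap $\tau(1+b_n)-\tau(1-b_n)=2\tau b_n$, or the ratio $(1+b_n)/(1-b_n)$, both of which increase in $b_n$. Since in the equilibrium partition of \Cref{prop:existunique} the borders $b_n^*$ are strictly increasing in the index $n$, the asymmetry grows as the pair of states lies farther from the WGC. The left hemisphere follows by symmetry. The case involving state $0$ and state $\pm1$ is covered by the same formula with $b_0^*\ge 0$, and the inequality is strict except in the excluded degenerate case $b_0^*=0$.

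The main obstacle is interpretive rather than computational. We must justify that the marginal effect of moving the border on $\ln R$, equivalently on $\ln C^z$, is the right measure of ``mattering,'' rather than the derivative of $R$ itself, in which the levels $R_n\ne R_{n+1}$ would contaminate the comparison. We will argue for log derivatives because the comparison concerns proportional welfare or real-income effects, and the asymmetry is then purely geometric. As a robustness check, we will also verify the claim with level derivatives. Since $R_{n+1}>R_n$ by the monotonicity noted after \eqref{eq:remoteness}, multiplying the larger semi-elasticity by the larger level only strengthens the inequality, and the increasing gap in $b_n$ survives as well.
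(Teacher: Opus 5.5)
Your proof is correct and takes the same approach as the paper. The paper measures the border's importance by the same ratio of semi-elasticities, $\bigl|\partial \ln R_{k+1}/\partial b_k\bigr| / \bigl|\partial \ln R_k/\partial b_k\bigr| = (1+b_k)/(1-b_k)$, observes that it exceeds one, and shows that its derivative $2/(1-b_k)^2$ is positive.

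One small correction, which affects only your optional level-derivative check. $R_{n+1}>R_n$ does not follow from the fixed-size monotonicity after \eqref{eq:remoteness}, because adjacent states differ in size. It does hold in equilibrium: \eqref{eq:FOC} gives $(R_{n+1}/R_n)^{\gamma-1} = (1-b_n)/(1-b_{n+1}) > 1$.
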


\begin{proof}
Consider three borders in the right hemisphere: \(b_{k-1}\), \(b_{k}\), and \(b_{k+1}\), with
\(0 \le b_{k-1} < b_k < b_{k+1} \le 1\).
Locales in \([b_{k-1}, b_k)\) constitute state \(k\), and locales in \([b_k, b_{k+1})\) constitute state \(k+1\).

Define relative importance in magnitude as
\begin{equation}
\widetilde T_k
\equiv
\frac{\left|\frac{\partial R_{k+1}}{\partial b_k}\right| / R_{k+1}}
     {\left|\frac{\partial R_k}{\partial b_k}\right| / R_k}
=
\frac{1+b_k}{1-b_k},
\label{eq:whichside}
\end{equation}
where the equality follows from equation~\eqref{eq:remoteness}. It is immediate that \(\widetilde T_k>1\). Moreover,
\[
\frac{\partial \widetilde T_k}{\partial b_k}
=
\frac{2}{(1-b_k)^2}
>
0.
\]
\end{proof}

Our model not only illustrates the border effect both geometrically and analytically, but also extends it. In particular, we find that the border effect intensifies when the two states are jointly geographically disadvantaged, as \( \frac{\partial \widetilde T_k}{\partial b_k} > 0 \). That is, when states $k$ and $k+1$ lie farther from the WGC, the same border has a greater impact on the two adjacent states, resulting in a larger border effect.

\section{Political Security\label{sec:polgeo}}

\subsection{Border stability}
Forces of globalization can be divided into an economic type, $d\tau<0$ (lower foreign trade costs), and a political type,  $dh<0$ (greater acceptance of others as fellows). We now examine how such globalization shocks reshape the political geography of the linear world by altering its partition:
\begin{proposition}\label{prop:borderstability}
When either $\tau$ or $h$ changes, border invariance requires the other to change in the same direction. Otherwise, interior borders are subject to greater pressure to adjust.
\end{proposition}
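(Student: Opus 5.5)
The plan is to work directly with the first-order conditions \eqref{eq:FOC00} and \eqref{eq:FOC}, which pin down every interior border recursively. For state $n\ge 1$ in the right hemisphere, define
\[
G_n(b_{n-1},b_n;\tau,h)\equiv \tau R_n^{\gamma-1}(1-b_n)-h=0,
\]
where, by \eqref{eq:remoteness}, $R_n=\exp\bigl(\tfrac{\tau}{2}[(1+b_{n-1})^2+(1-b_n)^2]\bigr)$. For state $0$ we use the analogous condition in $S_0$, namely $\tau R_0^{\gamma-1}(1-S_0)=h$ with $R_0=\exp(\tau(1+(1-S_0)^2/4)\cdot\text{const})$ from symmetry. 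Border invariance means $db_n=0$ for all $n$. The argument proceeds by induction outward from the WGC, using the recursive structure of the equilibrium.

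\textbf{Base step (state 0).} Holding $S_0$ fixed, totally differentiate the state-0 condition in $(\tau,h)$. Since $R_0$ is increasing in $\tau$ by \eqref{eq:Rtau}, and $\gamma>1$, the map $\tau\mapsto \tau R_0^{\gamma-1}(1-S_0)$ is strictly increasing. It follows that $d\tau$ and $dh$ must share a sign, with
\[
\frac{dh}{d\tau}=\frac{h}{\tau}\Bigl(1+(\gamma-1)\tau\,\frac{\partial \ln R_0}{\partial \tau}\Bigr)>0 .
\]
If instead only one parameter moves, or both move in opposite directions, then $\partial G_0/\partial S_0\neq 0$ forces $dS_0\neq 0$ by the implicit function theorem, so $b_{\pm 0}$ move. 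I will also verify the sign of $\partial G/\partial S$: it is negative, because $R$ falls in $S$ by \eqref{eq:RSeq} and $(1-b)$ falls in $b$.

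\textbf{Induction step.} Suppose $b_{n-1}$ is fixed. Then $G_n$ depends on $(b_n,\tau,h)$ only, and the same computation gives
\[
\frac{dh}{d\tau}\Big|_{db_n=0}=\frac{h}{\tau}\Bigl(1+\tfrac{(\gamma-1)\tau}{2}\bigl[(1+b_{n-1})^2+(1-b_n)^2\bigr]\Bigr)>0 .
\]
So same-direction comovement is again necessary for $db_n=0$.

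For the "otherwise" clause, consider moves with $d\tau\,dh\le 0$ (not both zero). The implicit function theorem, with $\partial G_n/\partial b_n<0$, gives $db_n\neq 0$ even before accounting for the move in $b_{n-1}$. I will interpret "greater pressure to adjust" as the nonvanishing of the induced border shift, i.e.\ the FOC residual $G_n$ at the old borders is nonzero and of one sign. A cleaner formulation: evaluate $G_n$ at the old borders after the shock. Opposite-direction moves make $dG_n=\partial_\tau G\,d\tau-dh$ a sum of two same-signed terms, so its magnitude exceeds that of either shock alone. This is the sense of "greater pressure".

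\textbf{Main obstacle.} The base step gives the necessity of same-direction comovement and the opposite-direction result cleanly. The delicate part is that the required ratio $dh/d\tau$ differs across $n$, because it depends on $b_{n-1}$ and $b_n$. So exact invariance of all borders generally cannot hold simultaneously. I would handle this by stating invariance border by border: for each border there is a positively sloped locus in $(\tau,h)$ space along which it stays fixed given its proximal neighbor. The proposition's claim (sign of comovement) is then exactly what is proved, while I note that the slopes are heterogeneous. The sign computation $\partial G_n/\partial b_n<0$ must also be checked, since $R_n^{\gamma-1}$ decreases in $b_n$ and $(1-b_n)$ decreases in $b_n$, so both factors reinforce.
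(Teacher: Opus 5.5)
\paragraph{Verdict: the first sentence is proved correctly; the second is not addressed.}
Your handling of the first sentence matches the paper's argument. Holding borders fixed, total differentiation of $F=\tau R_n^{\gamma-1}(1-b_n)-h=0$ gives $dh/d\tau=F_\tau>0$. Your expression $\frac{h}{\tau}\bigl(1+\frac{(\gamma-1)\tau}{2}[(1+b_{n-1})^2+(1-b_n)^2]\bigr)$ is exactly the paper's $F_\tau$ once the FOC is substituted. Your expression for $R_0$ is garbled; by \eqref{eq:remoteness}, $R_0=\exp(\tau(1-S_0/2)^2)$. You only use $\partial R_0/\partial\tau>0$ at fixed $S_0$, so the base step survives.

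\paragraph{The gap.}
The problem is the second sentence. In the proposition, ``interior'' means borders closer to the WGC, and ``greater pressure'' is a comparison with distal borders. The discussion after the proposition makes this explicit: under unbalanced globalization, collapse ``initiates at the WGC,'' and ``distal borders are less likely to change.'' Your reading is that an opposite-direction shock leaves a larger FOC residual than a one-parameter shock, so some border must move. That makes the clause nearly tautological and never compares borders at different locations. Your ``main obstacle'' paragraph already notices that the compensating slope $dh/d\tau$ differs across $n$. You treat this heterogeneity as a nuisance, but its sign is precisely the second half of the proposition.

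\paragraph{How the paper closes it, and a caution.}
The paper shows $\partial(dh/d\tau)/\partial b_n<0$. In $F_\tau=R_n^{\gamma-1}(1-b_n)\bigl(1+\frac{\tau(\gamma-1)}{2}[(1+b_{n-1})^2+(1-b_n)^2]\bigr)$, each of the three positive factors falls as $b_n$ rises with $b_{n-1}$ held fixed. Hence a given unbalanced shock leaves a larger residual $|F_\tau\,d\tau-dh|$ at borders near the WGC. You need to add this step.

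When you do, say explicitly that this is a ceteris-paribus derivative taken off the FOC. If you instead compare actual equilibrium borders, \eqref{eq:FOC} gives $R_n^{\gamma-1}(1-b_n)=h/\tau$ for every $n\ge 1$. Your slope then reduces to $\frac{h}{\tau}\bigl(1+(\gamma-1)\ln R_n\bigr)$, which increases with $n$ because $R_n$ does. So the ``interior'' conclusion depends on which notion of pressure and which comparison you adopt, and your write-up should state the one you use.
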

\begin{proof}
For expositional purposes, we treat $h$ implicitly as a function of $\tau$ and obtain 
\[
\frac{dh}{d\tau} > 0, \quad \frac{\partial (dh/d\tau)}{\partial b_n}<0.
\]
See Appendix \ref{dhdt} for details.
\end{proof}
\noindent Intuitively, when trade costs decline, locales must tolerate larger state sizes in order to preserve the current world partition. Absent this increased tolerance, the existing partition collapses and smaller states emerge. In other words, economic globalization and political globalization must co-occur to avoid fragmentation of the world.

When economic and political globalization are unbalanced, collapse of the world partition initiates at the WGC, where territories are most valuable. At first glance, this appears to contradict \Cref{prop:sensitivity}, which states that borders farther from the WGC are more “sensitive.” The two propositions, however, concern distinct phenomena: \Cref{prop:borderstability} compares border stability across the world, whereas \Cref{prop:sensitivity} compares the consequentiality of border changes. There is no contradiction—distal borders are less likely to change in response to parameter changes, but have larger effects when they do. For this reason, discussions of border sensitivity must distinguish between sensitivity in terms of stability and sensitivity in terms of consequentiality. These two notions exhibit opposite spatial patterns.

\subsection{State sizes}
In our model, states located farther from the WGC are more remote from the rest of the world. To offset this remoteness, they must rely on a larger domestic market in order to keep remoteness $R_n$ low:
\begin{proposition}\label{prop:size}
Provided that $\tau$ is not too low, the size of state $n \neq 0$ increases with $|n|$.
\end{proposition}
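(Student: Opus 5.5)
The plan is to treat the first-order condition \eqref{eq:FOC} as an implicit equation linking a state's size to the location of its proximal border. I would then show that this implied size is increasing in that location, and chain the comparison outward from state $1$. By symmetry it suffices to work in the right hemisphere with $n\ge 1$. Write $x \equiv b_{n-1}^*$ for the proximal border and $y \equiv 1-b_{n-1}^*-S_n = 1-b_n^*$ for the distance from the distal border to the pole. By \eqref{eq:remoteness}, $\ln R_n = \frac{\tau}{2}[(1+x)^2+y^2]$. Taking logs of \eqref{eq:FOC} then gives
\[
G(x,y)\equiv \ln y + \frac{\tau(\gamma-1)}{2}\bigl[(1+x)^2+y^2\bigr] = \ln\frac{h}{\tau}.
\]
Since $G_y = 1/y + \tau(\gamma-1)y>0$, this equation defines $y$ uniquely as a smooth function $y(x)$. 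Hence it also defines a size function $S(x)=1-x-y(x)$, and every interior state satisfies $S_n=S(b_{n-1}^*)$. The same function $S(\cdot)$ applies to every $n\ge1$, because \eqref{eq:FOC} has the same form for each of them.

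The key computation is implicit differentiation. From
\[
y'(x)=-\frac{G_x}{G_y}=-\frac{\tau(\gamma-1)(1+x)y}{1+\tau(\gamma-1)y^2}
\]
we get
\[
S'(x) = -1-y'(x)=\frac{\tau(\gamma-1)\,y\,(1+x-y)-1}{1+\tau(\gamma-1)y^2}.
\]
Note that $1+x-y = 2x+S \ge S>0$, which is exactly the positive factor in \eqref{eq:Rleftborder}. Therefore $S'(x)>0$ if and only if
\[
\tau(\gamma-1)\,y\,(2x+S)>1 .
\]
This is the precise content of ``$\tau$ not too low'': when $\tau$ is large enough, the marginal gain from moving the optimal distal border outward dominates. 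Economically, a more distal starting point raises remoteness, and the lord offsets this by enlarging the domestic market.

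Given $S'>0$ on the relevant range of $x$, the conclusion follows by induction. We have $b_n^*>b_{n-1}^*$ because $S_n>0$. Then $S_{n+1}=S(b_n^*)>S(b_{n-1}^*)=S_n$ by the mean value theorem applied to $S(\cdot)$ on $[b_{n-1}^*,b_n^*]$. This holds for every interior state $1\le n<N$. State $0$ is excluded because its condition \eqref{eq:FOC00} has a different form, which is consistent with the statement's restriction to $n\neq 0$.

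The main obstacle is making the sufficient condition hold uniformly over the relevant range of $x$. The factor $y=1-b_n^*$ can become small for the most distal interior states, and $2x+S$ is smallest for state $1$ when $b_0^*$ is small. I would first try to bound $y$ from below using \eqref{eq:FOC} itself. Since $R_n\le e^{2\tau}$, the condition gives $y = h/(\tau R_n^{\gamma-1}) \ge (h/\tau)e^{-2\tau(\gamma-1)}$. However, this lower bound deteriorates as $\tau$ grows, so it cannot deliver a clean threshold. The fallback is to state the hypothesis in the form the derivative computation requires: assume $\tau(\gamma-1)(1-b_n^*)(2b_{n-1}^*+S_n)>1$ over the interior states, and argue that this is what ``$\tau$ not too low'' means. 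The supporting observation is that the left-hand side is proportional to $\tau$ holding the geometry fixed, so the inequality fails only when $\tau$ is small relative to $1/(\gamma-1)$.
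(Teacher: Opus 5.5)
Your implicit differentiation is the paper's argument in different coordinates. The paper writes $F=\tau R_n^{\gamma-1}(1-b_{n-1}-S_n)-h$ and uses $\partial S_n/\partial b_{n-1}=-F_{b_{n-1}}/F_S$ with $F_S<0$. Your $S'(x)$ is exactly this ratio, so your condition $\tau(\gamma-1)(1-b_n^*)(2b_{n-1}^*+S_n)>1$ is the paper's $F_{b_{n-1}}>0$. You go further than the paper by making the chaining to $S_{n+1}>S_n$ explicit, which it leaves implicit. However, the mean value theorem needs $S'>0$ on the whole interval $[b_{n-1}^*,b_n^*]$. Your fallback hypothesis is imposed only at the points $b_{n-1}^*$, so it does not support that step.

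The genuine gap is the translation into ``$\tau$ not too low.'' The left-hand side is not proportional to $\tau$ in equilibrium, because the geometry is pinned down by the first-order condition: $\tau(1-b_n^*)=hR_n^{1-\gamma}$. Your condition is therefore equivalent to $(\gamma-1)h(b_{n-1}^*+b_n^*)>R_n^{\gamma-1}$. Since $R_n\ge e^{\tau/2}$ in the right hemisphere, this fails for every state once $\tau$ is large, holding $h$ and $\gamma$ fixed. So your claim that the inequality fails only for small $\tau$ points the wrong way.

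Worse, no parameter restriction delivers the condition for all interior states. Because $y'<0$, the map $x\mapsto 1-y(x)$ is increasing. Hence $b_n^*=1-y(b_{n-1}^*)$ increases to the first zero $x^*<1$ of $S(\cdot)$; no state reaches the pole, since $1-b_n^*=h/(\tau R_n^{\gamma-1})>0$. It follows that $S_n\to 0$ and $S'(x^*)\le 0$, so sizes must eventually shrink.

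The paper's threshold $\tau>1/((\gamma-1)b_0(1-b_0))$ does not rescue this, since it gives no lower bound on $1-b_n^*$. What your argument, and the paper's, actually proves is this: if $S'>0$ on $[b_0^*,b_K^*]$, then $S_1<S_2<\cdots<S_{K+1}$. State the hypothesis in that form rather than attributing it to large $\tau$.
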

\begin{proof}
See Appendix \ref{hypo}.
\end{proof}

One might expect state 0 to be the smallest state in the world, since it is closest to the rest of the world. However, state 0 is not necessarily the smallest state because a competing force applies uniquely to it. Unlike all other states, state 0 expands its borders in two opposite directions. As a result, the marginal return relative to the disutility from a unit increase in size is higher for state 0 than for other states. To illustrate this difference, compare state $0$'s first-order condition \eqref{eq:FOC00} with state $1$'s first-order condition \eqref{eq:FOC11}:
\begin{equation*}
\tau R_0^{\gamma - 1}\left(1 - S_0\right) = h
\hspace{5mm} \text{versus} \hspace{5mm}
\tau R_1^{\gamma - 1}\left(1 - \frac{S_0}{2} - S_1\right) = h.
\end{equation*}
Recall $R_0 < R_1$ and $\gamma > 1$. The only requirement on the relative sizes of $S_0$ and $S_1$ is that
$1 - S_0 > 1 - S_0/2 - S_1$, or $S_0/2<S_1$. $S_0$ is not necessarily smaller than $S_1$. 

The exceptional role of state 0 in terms of size is reminiscent of historical empires, which were often located near the center of continents or subcontinents. The size of state 0 varies with $\tau$, $h$, $\gamma$, and their combination. The resulting distribution of state sizes over time can be analyzed using comparative statics across repeated, unique equilibria of the model. In this perspective, fluctuations in the size of state 0 can themselves be interpreted as shocks. This repeated-equilibrium view leads to the following result:
\begin{proposition}\label{prop:state0}
Provided that $\tau$ is not too low, an increase in the size of state $0$ makes state $n \neq 0$ farther from the WGC and larger in size, and this effect increases with $|n|$.
\end{proposition}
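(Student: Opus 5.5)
We treat the recursion defined by the first-order condition \eqref{eq:FOC} as a map from the proximal border of state $n$ to its size, and then propagate a perturbation of $S_0$ outward through the right hemisphere. The left hemisphere follows by symmetry.

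\emph{Step 1: the single-state map.} For a state whose proximal border is $b\ge 0$, write $x \equiv 1-b-S$ for the distance from its distal border to the world edge. By \eqref{eq:remoteness},
\[
\ln R = \tfrac{\tau}{2}\left[(1+b)^2 + x^2\right].
\]
Condition \eqref{eq:FOC} then reads
\[
F(x,b)\equiv \ln\tau + \ln x + \tfrac{(\gamma-1)\tau}{2}\left[(1+b)^2+x^2\right] - \ln h = 0.
\]
We have $F_x = 1/x + (\gamma-1)\tau x>0$ and $F_b=(\gamma-1)\tau(1+b)>0$. Hence
\[
\frac{dx}{db} = -\frac{(\gamma-1)\tau(1+b)}{1/x+(\gamma-1)\tau x}<0,
\]
and therefore
\[
\frac{dS}{db} = -1-\frac{dx}{db} = -1+\frac{(\gamma-1)\tau x(1+b)}{1+(\gamma-1)\tau x^2}.
\]
The hypothesis ``$\tau$ not too low'' is used here, as in the proof of \Cref{prop:size} (Appendix \ref{hypo}), which I take as given. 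It should deliver $\sigma(b)\equiv dS/db>0$: pushing a state's proximal border outward enlarges the state. I will also need $\sigma$ to be nondecreasing in $b$ over the relevant range, or at least that $1+\sigma(b)$ stays above $1$. I will try to obtain this from the same parameter condition by differentiating the expression above in $b$, using $dx/db<0$.

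\emph{Step 2: propagation.} Note that $S_0$ depends on $\tau$, $h$ and $\gamma$ through \eqref{eq:FOC00}. Following the repeated-equilibrium view in the text, we treat an exogenous shift $dS_0>0$ as a shock and hold the later conditions \eqref{eq:FOC} fixed. Since $b_0=S_0/2$, we get $db_0 = dS_0/2>0$. By Step 1, $dS_1=\sigma(b_0)\,db_0>0$ and $db_1 = (1+\sigma(b_0))\,db_0$. Inductively,
\[
db_n=\prod_{i=0}^{n-1}\bigl(1+\sigma(b_i)\bigr)\,\frac{dS_0}{2},
\qquad
dS_{n+1}=\sigma(b_n)\,db_n .
\]
Because every factor satisfies $1+\sigma>1$, the sequence $db_n$ is positive and strictly increasing in $n$. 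This shows that state $n$ moves farther from the WGC, and by a growing amount. The size response $dS_{n+1}$ is the product of the growing $db_n$ and $\sigma(b_n)>0$, so it is positive and increases with $n$ whenever $\sigma$ is nondecreasing along the sequence $b_0<b_1<\cdots$. Even if $\sigma$ is not monotone, it suffices that $\sigma(b_{n+1})(1+\sigma(b_n))\ge\sigma(b_n)$.

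\emph{Main obstacle.} The difficult part is the sign and monotonicity of $\sigma(b)$. Both the numerator term $x(1+b)$ and $x$ itself shrink as $b$ rises, so ``$\tau$ not too low'' must be made precise. I would first try to reuse exactly the inequality established in Appendix \ref{hypo} for \Cref{prop:size}. That result already needs $S_{n+1}>S_n$ along the equilibrium sequence, which is essentially a discrete version of $\sigma>0$. I would then check that the ratio condition above follows on the equilibrium path. If direct monotonicity fails, I would fall back to the weaker statement that the outward displacement $db_n$ increases with $|n|$, together with positivity of $dS_n$.
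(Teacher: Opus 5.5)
\paragraph{Verdict.} Your route is the paper's own, and the step you leave open is also left open by the paper. Your planned way of closing it does not work.

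\paragraph{Comparison with the paper.} Appendix~\ref{state0proof} likewise writes $\partial b_n/\partial b_{n-1}=1+\partial S_n/\partial b_{n-1}>1$, taking positivity from Proposition~\ref{prop:size}. It chains these into $\partial b_n/\partial b_0$ as a product of factors exceeding one, and factors $\partial S_n/\partial b_0=(\partial S_n/\partial b_{n-1})(\partial b_{n-1}/\partial b_0)$. These are exactly your $db_n$ and $dS_{n+1}=\sigma(b_n)\,db_n$. The step you leave open is that the size response grows with $n$. The paper does not prove it either: it only asserts that $\partial S_n/\partial b_{n-1}$ increases in $n$ because ``$b_n$ must extend farther from $b_{n-1}$.'' That conflates $S_n$ growing with its derivative growing. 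So you are not missing an idea the paper supplies. What your Step~2 (and the paper) actually establishes is $dS_n>0$ together with an outward shift $db_n$ that increases with $n$, not a size response that increases with $|n|$.

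\paragraph{Why differentiating $\sigma$ fails.} Write $A=(\gamma-1)\tau$ and $x=1-b_n$. Your own formulas give $dx/db=-(1+\sigma)$, and then
\[
\frac{d\sigma}{db}=\frac{Ax^2-(1+\sigma)^2\,(1-Ax^2)}{x\,(1+Ax^2)}.
\]
This is strictly negative whenever $\sigma>0$ and $Ax^2\le 1/2$, which is what happens for distal states, where $x$ is small. Raising $\tau$ does not help. The first-order condition gives $Ax=(\gamma-1)hR_n^{1-\gamma}\le(\gamma-1)h\,e^{-(\gamma-1)\tau/2}$, since $R_n\ge e^{\tau/2}$.

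\paragraph{The positivity you import is also fragile.} The same identity shows that $\sigma>0$ is not delivered by a lower bound on $\tau$. Indeed $\sigma>0$ is equivalent to $(\gamma-1)\tau(1-b_n)(b_{n-1}+b_n)>1$, that is, to $(\gamma-1)h\,(b_{n-1}+b_n)R_n^{1-\gamma}>1$. This fails for every interior state once $2(\gamma-1)h\,e^{-(\gamma-1)\tau/2}<1$. Hence no lower bound such as \eqref{eq:taurange} can deliver it; you need $\tau$ in an intermediate window, given $h$.

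\paragraph{What you can do.} Either impose the increasing-size claim as an assumption, or state the weaker result you list as a fallback. If you impose it, your condition $\sigma(b_{n+1})(1+\sigma(b_n))\ge\sigma(b_n)$ is exactly the necessary and sufficient one, given positivity.
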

\begin{proof}
See Appendix \ref{state0proof}.
\end{proof}

\begin{figure}[h!]
\centering
\caption{Illustration of \Cref{prop:state0} Using State~1 as an Example}
\label{fig:statesize}
\hspace*{1cm}\includegraphics[page=3,scale=.8,clip]{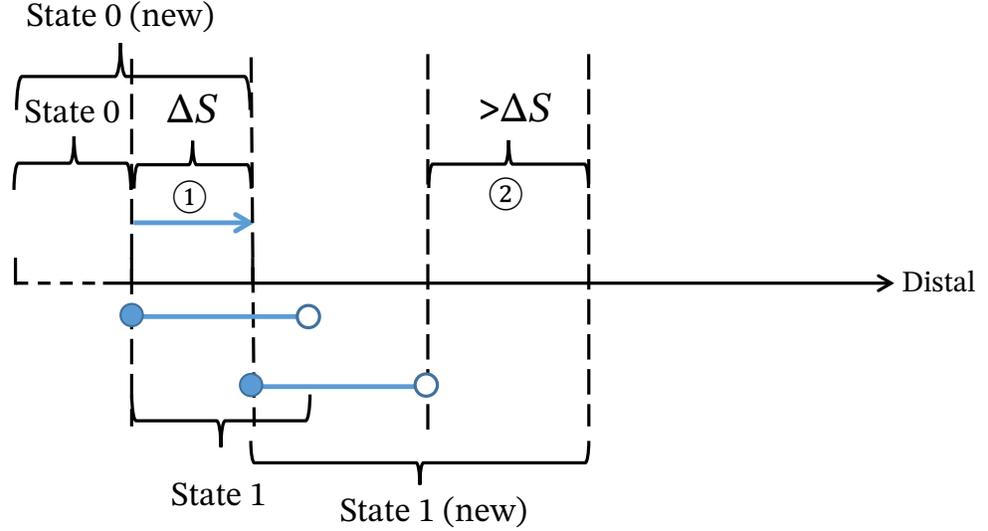}
\vspace{-400pt}
{ \par \raggedright \footnotesize \textit{Notes:} Solid (hollow) circles denote closed (open) interval endpoints. Locales represented by solid (hollow) circles belong to the distal (proximal) state. \par}
\end{figure}

The intuition behind \Cref{prop:state0} is as follows. State 0 is always the state with the lowest $R$ among all contemporaneous states, so surrounding locales prefer to join state 0 whenever feasible, and this preference propagates outward. When state 0 increases in size, other states are effectively ``pushed away'' from the WGC. As they are pushed away, these states must grow in size because the new locales they acquire are less advantageous than those they relinquish. \Cref{fig:statesize} illustrates this mechanism using state 1 as an example. Suppose the size of state 0 increases toward the right by a distance $\Delta S$. Locales in region \circled{1}, with measure $\Delta S$, which previously belonged to state 1, are now absorbed by state 0. Consequently, state 1 must expand to include region \circled{2}, whose measure exceeds $\Delta S$. This increase in size occurs because the newly acquired territory \circled{2} consists of less favorable locations than the lost territory \circled{1}. The same force applies to state 2, but with greater magnitude.

\subsection{Suffrage expansion}
Labor does not bear governance costs; consequently, incorporating labor into border drawing leads to a different world partition.\footnote{Economists across the political spectrum have long noted labor’s limited interests in national politics. For example, \citet{Marx48} argues that workers “have no country” because they lack a material stake in any state, whereas \citet{Friedman62} notes that workers move wherever their labor has the most productive use, rendering their nationality largely obsolete.} This mechanism resembles the effect of suffrage expansion: extending political voice to labor alters how the state is governed. Formally, consider a border setter at locale \(t\) with objective function:
\begin{equation}
W(t;\phi)=U(t)+\phi V(t), \label{eq:se}
\end{equation}
where \(\phi>0\) denotes the weight that the border drawer assigns to labor.\footnote{\(\phi\) is distinct from the \(\psi\) in \(V(t)\): the latter governs the marginal utility of consumption and adds flexibility to the model’s specification.
}  
Let border drawers determine state sizes, denoted by \(S_n^{SE}\), where \(SE\) refers to suffrage expansion. Let \(N^{SE}\) denote the corresponding number of states, and let \(N^*\) denote the number of states in the baseline model. We then have:
\begin{proposition} \label{prop:SE}
\(N^{SE} \le N^*\), and \(S_n^{SE} > S_n^*\) for each state \(n\) if \(N^{SE}=N^*\).
\end{proposition}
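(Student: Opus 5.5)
The plan is to show that giving labor a weight \(\phi\) in the border setter's objective acts exactly like lowering the governance cost \(h\). Then I propagate that change outward from the WGC using the recursive first-order conditions \eqref{eq:FOC00} and \eqref{eq:FOC}.

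\textbf{Step 1: reduce to a lower effective governance cost.} By the same argument as in \Cref{lemma:suff}, labor at every locale has locale-invariant nominal income \((1-\alpha)\kappa\) and faces the same price index \(R(t)\). Hence \(C^l(t)=c_l/R(t)\), where \(c_l>0\) is a constant independent of the partition. After the normalization, \(C^z(t)=1/R(t)\). Therefore
\begin{equation*}
W(t;\phi)=\frac{1+\phi\psi c_l^{1-\gamma}}{1-\gamma}\,R(t)^{\gamma-1}-hS(t).
\end{equation*}
Repeating the Date~2 derivation with this objective, every first-order condition keeps its form with \(h\) replaced by \(h^{SE}\equiv h/(1+\phi\psi c_l^{1-\gamma})<h\). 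The two conditions become \(\tau R_0^{\gamma-1}(1-S_0)=h^{SE}\) and \(\tau R_n^{\gamma-1}(1-b_{n-1}-S_n)=h^{SE}\). The uniqueness argument of \Cref{prop:existunique} carries over unchanged, so the suffrage-expansion partition is obtained by the same outward recursion.

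\textbf{Step 2: borders move outward.} For state 0, the left-hand side is strictly decreasing in \(S_0\), so \(S_0^{SE}>S_0^*\) and \(b_0^{SE}>b_0^*\). For \(n\ge 1\), I rewrite the first-order condition in terms of the two borders \(b\equiv b_{n-1}\) and \(b'\equiv b_n\):
\begin{equation*}
f(b,b')\equiv \tau \exp\!\Big(\tfrac{(\gamma-1)\tau}{2}\big[(1+b)^2+(1-b')^2\big]\Big)(1-b').
\end{equation*}
This function is strictly decreasing in \(b'\) and strictly increasing in \(b\). Suppose \(b_{n-1}^{SE}\ge b_{n-1}^*\). Since the target value \(h^{SE}\) is lower, solving \(f(b,b')=h^{SE}\) gives \(b_n^{SE}>b_n^*\). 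Induction on \(n\) gives \(b_n^{SE}>b_n^*\) for every \(n\), and symmetrically in the left hemisphere. The counting rule \eqref{eq:stateno} then implies that fewer (or equally many) interior states fit into \([-1,1]\), so \(N^{SE}\le N^*\).

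\textbf{Step 3: sizes, the main obstacle.} Showing \(S_n^{SE}>S_n^*\) for \(n\ge1\) needs more than ordered borders. Both borders of state \(n\) move outward, so the size comparison hinges on two effects:
\begin{itemize}
\item the lower effective cost \(h^{SE}\), which at a fixed left border enlarges the state, because \(\tau R^{\gamma-1}(1-b-S)\) is decreasing in \(S\) with derivative \(-\tau R^{\gamma-1}[(\gamma-1)\tau(1-b-S)^2+1]\);
\item the effect of the outward shift of \(b_{n-1}\), whose sign is ambiguous because the \(b\)-derivative of the left-hand side in \((b,S)\) coordinates is proportional to \((\gamma-1)\tau(2b+S)(1-b-S)-1\).
\end{itemize}
I would decompose \(S_n^{SE}-S_n^*\) into these two pieces. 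For the border-shift piece I would try first to reuse the comparative static behind \Cref{prop:size} and \Cref{prop:state0}: pushing a state's left border away from the WGC enlarges it. I would also check whether the hypothesis \(N^{SE}=N^*\) is what keeps all states interior, so that their first-order conditions hold with equality, and whether it supplies the regime where that comparative static applies. If that route fails, I would instead compare the two recursions state by state directly, using the first-order condition to bound how far \(b_{n-1}^{SE}\) moves relative to the gain from \(h^{SE}<h\).
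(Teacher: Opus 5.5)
Your Steps 1 and 2 are correct and match the paper's reduction. The paper also rewrites the border setter's condition as $(1+\phi\theta_n)\tau R_n^{\gamma-1}(1-b_{n-1}-S_n)=h$, which is your lower effective cost $h^{SE}<h$. Your induction on borders is a cleaner route to $N^{SE}\le N^*$ than the paper's, which deduces it from the size claim.

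The genuine gap is Step 3, which you leave as a plan, and the plan cannot be completed under the stated hypotheses. The paper gets $S_n^{SE}>S_n^*$ simply by comparing \eqref{eq:SEFOC} with \eqref{eq:FOC}, i.e.\ at a common left border. That is exactly the comparison you rightly judged insufficient, so the paper does not supply the missing step either. Write $b_n=\varphi_h(b_{n-1})$ for the solution of \eqref{eq:FOC}. Then
\[
S_n^{SE}-S_n^*=\bigl[\varphi_{h^{SE}}(b_{n-1}^{SE})-\varphi_h(b_{n-1}^{SE})\bigr]+\bigl[\bigl(\varphi_h(b_{n-1}^{SE})-b_{n-1}^{SE}\bigr)-\bigl(\varphi_h(b_{n-1}^{*})-b_{n-1}^{*}\bigr)\bigr].
\]
The first bracket is positive. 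The second is nonnegative if $\varphi_h(b)-b$ is nondecreasing on $[b_{n-1}^*,b_{n-1}^{SE}]$, i.e.\ if $(\gamma-1)\tau(2b+S)(1-b-S)\ge 1$ there. Otherwise it can be negative and can dominate. That condition is the ``$\tau$ not too low'' regime of \Cref{prop:size}. \Cref{prop:SE} does not assume it, and the hypothesis $N^{SE}=N^*$ does not deliver it.

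Without that condition the claim fails. Take $\gamma=2$ and $\tau=1$. Set $h\approx 1.138$, so that \eqref{eq:FOC00} gives $b_0^*=0.2$, and $h^{SE}=1$, so that $b_0^{SE}\approx 0.2255$. The recursion \eqref{eq:FOC} then gives $(S_1,S_2,S_3)\approx(0.309,0.147,0.066)$ in the baseline and $(0.344,0.150,0.058)$ under suffrage expansion. Hence $S_3^{SE}<S_3^*$.

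Neither recursion ever reaches the pole. The function $\tau e^{(\gamma-1)\tau(1+b^2)}(1-b)$ exceeds $h$ at $b_0$ and vanishes at $b=1$, so the borders accumulate at its crossing with $h$: about $0.77$ in the baseline and $0.81$ under suffrage expansion. The failure at $n=3$ therefore has nothing to do with a difference in the number of states.

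The problem is structural. Near such an accumulation point $\bar b$, sizes decay geometrically at rate $\varphi_h'(\bar b)<1$. Whenever this rate falls as $h$ falls (for instance whenever $\bar b$ is close to $1$, the large-$\tau$ case), the suffrage-expansion states are eventually smaller.

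To rescue Step 3 you need an explicit restriction guaranteeing $\partial S_n/\partial b_{n-1}\ge 0$ between the two left borders of state $n$. This is a state-by-state version of the condition behind \Cref{prop:size}, and it necessarily fails near the accumulation point. Once it is imposed, your decomposition closes the argument. Your appeal to \Cref{prop:state0} is the right instinct, but it is valid only conditionally.
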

\begin{proof}
See Appendix \ref{se}. 
\end{proof}
\noindent In other words, granting voice to labor implies either (a) a smaller total number of states globally or (b) larger individual states for any given index $n$, holding the number of states fixed. The intuition is straightforward: when borders are determined solely by lords, they fail to internalize the positive consumption externality that larger states confer on labor.

Three notes are in order. First, when \(\phi=1\), the suffrage-extension solution coincides with the solution chosen by the social planner of each locale, who weights the welfare of the lord and the labor equally. By comparison, the solution to the baseline model is a decentralized equilibrium in which borders are drawn by lords, who can be viewed as border drawers with $\phi=0$.

Second, the results in \Cref{prop:SE} become stronger as \(\tau\) decreases, especially for states located farther from the WGC:
\[
\frac{\partial}{\partial\tau}\bigl(S_n^{\text{SE}}-S_n^*\bigr)<0,\qquad 
\frac{\partial^2 \bigl(S_n^{\text{SE}}-S_n^*\bigr)}{\partial \tau\,\partial R_n}<0.
\]
Intuitively, when \(\tau\) is high, borders determined under suffrage extension and those determined solely by lords are similar, because labor’s consumption gains from belonging to a larger state are limited—particularly in peripheral states. Therefore, a reduction in \(\tau\), interpreted as economic globalization, intensifies the conflict of interest between lords and labor in geographically disadvantaged locations. 

Third, state formation reflects both cross-locale compromises (a ``republican problem,'' driven by \(h\)) and within-locale compromises (a ``democratic problem,'' driven by \(\phi\)).\footnote{James Madison, in \textit{Federalist No.~10}, contrasts a democracy—``a society consisting of a small number of citizens, who assemble and administer the government in person''—with a republic—``a government in which the scheme of representation takes place.'' This distinction parallels the within-locale democratic and cross-locale republican structure modeled by us.} Our model, as extended in this subsection, integrates two distinct forces in modern politics that underpin democracy without reference to any specific political regime: the state rests on political tolerance among constituents and on a balance of economic interests within each constituent.

\section{Political Ideology\label{sec:polideo}}

\subsection{National opinion}
The baseline model provides a clean framework for studying interstate ideology. Because locales share identical factor endowments, technology, and preferences, ideological differences across locales must arise from geographic location. This structure enables us to build a variety of models of political ideology with geographic determinism.

Suppose the world line also represents a spectrum of opinions on a given issue, with the most left-wing (right-wing) opinion held by locale $t=-1$ ($t=1$). We define a state’s national opinion as the opinion held by its median locale. This definition allows us to compare national opinions across states and relative to the global median.

For a state $n$ located in the right hemisphere of the world, its national opinion is
\begin{equation}
G_{n}
= \frac{1}{2} \left( b_{n-1} + b_{n} \right)
= b_{n-1} + \frac{S_{n}}{2}
= \frac{1}{2} S_0 + \sum_{k=1}^{n-1} S_k + \frac{1}{2} S_{n}.
\end{equation}
The national opinion of state $-n$ in the left hemisphere follows immediately: $G_{-n} = -G_n$. Likewise, the national opinion of the central state is $G_0 = 0$. We can then use the moments of $G_n$ to characterize the distribution of national opinions. The mean is
\begin{equation}
\mathbb{E}(G_n)
= \frac{1}{2N} \sum_{n=-N}^{N} G_n
= \frac{1}{2N} \sum_{n=1}^{N} \left( G_n + G_{-n} \right)
= 0,
\end{equation}
and the variance is
\begin{equation}
\text{Var}(G_n)
= \frac{1}{2N} \sum_{n=-N}^{N} \left( G_n - \mathbb{E}(G_n) \right)^2
= \frac{1}{N} \sum_{n=1}^{N} G_n^2.
\label{eq:varG}
\end{equation}
\noindent Intuitively, the median of national medians coincides with the global median. We take the global median as the world’s opinion, so that a state whose national opinion lies to the left (right) of the world’s opinion is classified as left-leaning (right-leaning). The dispersion of national opinion, as measured by $\text{Var}(G_n)$, captures the degree of differentiation across national opinions. As equation~\eqref{eq:varG} shows, larger deviations of national opinions from the global median reduce global consensus at a quadratic rate.

We next study how global consensus responds to shocks in global economic and political integration:
\begin{proposition}\label{prop:consensus}
With the number of states fixed, national opinions diverge as economic globalization retreats ($\tau \uparrow$) but converge as political globalization retreats ($h \uparrow$).
\end{proposition}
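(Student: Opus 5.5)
Since $G_n>0$ for $n\ge1$ and $G_{-n}=-G_n$, the variance in \eqref{eq:varG} moves with the $G_n$ themselves. The plan is therefore to show that every $G_n$, $n\ge 1$, increases in $\tau$ and decreases in $h$, holding $N$ fixed. Each border is built recursively: $b_0^*=S_0/2$ and $b_n^*=b_{n-1}^*+S_n$, where $S_n$ is pinned down by \eqref{eq:FOC} given $b_{n-1}^*$. Because $G_n=(b_{n-1}^*+b_n^*)/2$, it suffices to sign $db_n^*/d\tau$ and $db_n^*/dh$ for every $n\ge0$. I would do this by induction, differentiating each first-order condition.

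\textbf{Base case.} Write \eqref{eq:FOC00} as $F_0(S_0;\tau,h)\equiv \tau R_0^{\gamma-1}(1-S_0)-h=0$, with $R_0=\exp(\tau(1-S_0/2)^2)$. This gives $\partial F_0/\partial S_0<0$ and $\partial F_0/\partial h=-1$, so $dS_0/dh<0$. For $\tau$, we have $\partial F_0/\partial\tau=R_0^{\gamma-1}(1-S_0)\bigl[1+(\gamma-1)\tau(1-S_0/2)^2\bigr]>0$, so $dS_0/d\tau>0$. Hence $b_0^*$ is increasing in $\tau$ and decreasing in $h$.

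\textbf{Inductive step.} Suppose $b_{n-1}^*$ is increasing in $\tau$ and decreasing in $h$. Define
\[
F_n(b_n;b_{n-1},\tau,h)\equiv\tau R_n^{\gamma-1}(1-b_n)-h,\qquad R_n=\exp\bigl(\tfrac{\tau}{2}[(1+b_{n-1})^2+(1-b_n)^2]\bigr),
\]
and treat $b_n$ rather than $S_n$ as the unknown. Three partial derivatives are needed:
\begin{itemize}
\item $\partial F_n/\partial b_n<0$, since both $(1-b_n)$ and $R_n$ fall in $b_n$.
\item $\partial F_n/\partial b_{n-1}=\tau(\gamma-1)(1+b_{n-1})R_n^{\gamma-1}(1-b_n)\cdot\tau>0$, which is the direction already recorded in \eqref{eq:Rleftborder}.
\item $\partial F_n/\partial\tau>0$, by the same computation as the base case.
\end{itemize}
By the implicit function theorem, $db_n^*=-(\partial F_n/\partial b_n)^{-1}\bigl[\partial_{b_{n-1}}F_n\,db_{n-1}^*+\partial_\tau F_n\,d\tau-dh\bigr]$. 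Every term carries the same sign, so $b_n^*$ rises with $\tau$ and falls with $h$. Both the direct effect and the propagated effect push in the same direction, which is the ``pushing outward'' logic of \Cref{prop:state0}, but here no restriction on $\tau$ is needed because only positions are signed, not sizes.

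\textbf{Conclusion and main obstacle.} Averaging consecutive borders gives $dG_n/d\tau>0$ and $dG_n/dh<0$ for all $n\ge1$. Therefore $\text{Var}(G_n)=\frac1N\sum G_n^2$ rises with $\tau$ and falls with $h$. The main subtlety is the hypothesis ``with the number of states fixed.'' The comparative statics must stay inside a region where \eqref{eq:stateno} yields the same $N$, so that the sum runs over the same index set and the outermost $b_N^*$ remains interior. I would state this as the local condition under which the derivatives above are taken. A second check is that the second-order condition holds, i.e., $\partial F_n/\partial b_n<0$ at the solution, so that the implicit function theorem applies. This is immediate from the expression, since both factors are decreasing in $b_n$.
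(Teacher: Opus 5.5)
Your argument is correct, and it takes a different route from the paper's.

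\textbf{The paper's route.} The paper differentiates $\text{Var}(G_n)$ through the size decomposition $G_n=\tfrac12 S_0+\sum_{k=1}^{n-1}S_k+\tfrac12 S_n$. It signs each term separately, using $\partial S_k/\partial\tau=-F_\tau/F_S>0$ and $\partial S_k/\partial h=-F_h/F_S<0$. These are partials of \eqref{eq:FOC} with the inner border $b_{k-1}$ held fixed, so they capture only the direct effects. The total change in $S_k$ also contains the propagated term $(\partial S_k/\partial b_{k-1})\,db_{k-1}$. Here $\partial S_k/\partial b_{k-1}=-F_{b_{k-1}}/F_S$ is positive only under the condition of \Cref{prop:size}, i.e.\ $\tau$ not too low. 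Otherwise the indirect term works against the direct one, and signing the size derivatives term by term is not justified.

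\textbf{Your route.} You avoid this by telescoping the sizes into border positions, $G_n=(b_{n-1}^*+b_n^*)/2$, and taking $b_n$ rather than $S_n$ as the unknown. Your partial $\partial F_n/\partial b_{n-1}=(\gamma-1)\tau^2(1+b_{n-1})(1-b_n)R_n^{\gamma-1}$ equals the paper's $F_{b_{n-1}}-F_S$. Hence $db_n^*/db_{n-1}^*=1+\partial S_n/\partial b_{n-1}>0$ holds unconditionally, even when $\partial S_n/\partial b_{n-1}<0$. Your induction therefore gives monotone border positions, and so the proposition, with no restriction on $\tau$.

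\textbf{Comparison.} The paper's route keeps the size-based term $(\#)$ explicit, which it reuses to discuss the opposing forces when $N$ adjusts. Your route buys a complete general-equilibrium argument valid over the whole parameter range.

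Your base case is also robust to a slip in \eqref{eq:FOC00}. Differentiating $R_0=\exp(\tau(1-S_0/2)^2)$ directly gives the factor $1-S_0/2$, not the printed $1-S_0$. The signs of $\partial F_0/\partial S_0$ and $\partial F_0/\partial\tau$ are the same for either form.
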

\begin{proof}
Consider the right hemisphere without loss of generality:
\begin{align}
\frac{\partial \text{Var}(G_n)}{\partial \tau}
&= \frac{1}{N} \sum_{n=1}^{N} G_n
\left(
\frac{\partial S_0}{\partial \tau}
+ \overbrace{2\sum_{k=1}^{n-1}\frac{\partial S_k}{\partial \tau}}^{(\#)}
+ \frac{\partial S_{n}}{\partial \tau}
\right) > 0,
\label{eq:consensus_tau_shock} \\
\frac{\partial \text{Var}(G_n)}{\partial h}
&= \frac{1}{N} \sum_{n=1}^{N} G_n
\left(
\frac{\partial S_0}{\partial h}
+ \underbrace{2\sum_{k=1}^{n-1}\frac{\partial S_k}{\partial h}}_{(\#)}
+ \frac{\partial S_{n}}{\partial h}
\right) < 0.
\label{eq:consensus_h_shock}
\end{align}
See \Cref{consensusproof} for details.
\end{proof}
\noindent Intuitively, an increase in $\tau$ raises optimal state sizes in order to internalize trade costs, pushing states, and hence their national opinions, further apart. By contrast, an increase in $h$ raises governance costs, leading to smaller states and a tighter clustering of national opinions. The latter case is particularly interesting: although constituents within each state become less tolerant of one another, global consensus increases because the world political system is composed of states that are more similar to each other.

It is important to note that \Cref{prop:consensus} holds the number of states in the world fixed. Once the number of states adjusts endogenously, \Cref{prop:consensus} need not hold. On the one hand, the creation (destruction) of states increases (decreases) state differentiation by introducing (eliminating) additional national opinions. On the other hand, shocks that give rise to new states also reduce state sizes across the board, which compresses the spacing among national opinions. These opposing forces imply an ambiguous net effect on global ideological dispersion. This can be seen directly from the summation term $(\#)$ in equations~\eqref{eq:consensus_tau_shock}--\eqref{eq:consensus_h_shock}: when the number of states increases (decreases), the summation includes more (fewer) terms, but each term is smaller (larger) in magnitude.

\subsection{Regional separatism}

In the baseline model, a locale’s choice of which neighboring locales to join determines its remoteness from the rest of the world. However, a locale can join only those neighboring locales that are available to it. As a result, a locale may prefer a state affiliation different from its equilibrium one. We refer to such a counterfactual, idealized state affiliation as \emph{separatism}, which is the focus of the analysis below. Our approach to studying separatism mirrors how social scientists conceptualize societal sentiments: as imagined communities with particular compositions of fellows.\footnote{For example, economists’ modeling of religionism \citep{Iannaccone92}, political scientists’ modeling of nationalism \citep{Anderson83}, and sociologists’ modeling of racism \citep{Blumer58}.} In our setting, the imagined community corresponds to each locale’s ideal state. 

\begin{figure}[h!]
\centering
\caption{Separatism of Locale $t$}
\label{fig:separatism}
\hspace*{.5cm}\includegraphics[page=4,scale=.7,clip]{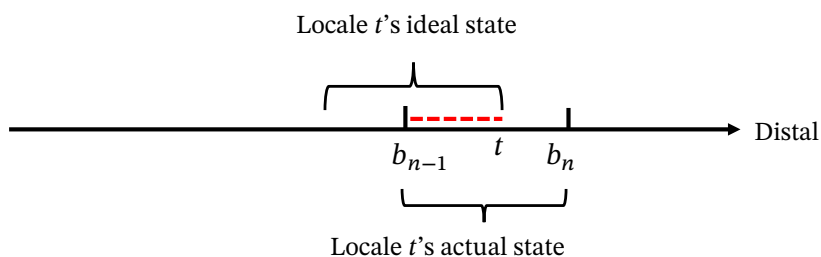}
\vspace{-400pt}
{ \par \raggedright \footnotesize \textit{Notes:} The red dashed line represents the overlap between locale $t$’s ideal state and its actual state. The overlap reduces locale $t$’s separatism. \par}
\end{figure}

Consider a locale in the right hemisphere. By equation \eqref{eq:FOC}, the ideal set of fellow locales that locale $t$ wishes to have is 
\[
I_t = [b_{n-1}^{I}(t), b_n^{I}(t)] 
= [t + \varepsilon - S_n,\; t + \varepsilon],
\]
which has the same size $S_n$ as its actual state $[b_{n-1}, b_n)$.\footnote{Here we assume $t + \varepsilon - S_n > 0$; that is, locales in the right hemisphere do not seek to join left-hemisphere locales when forming separatism.} We ignore the $\epsilon$ for expositional convenience. Let the state to which locale $t$ belongs be denoted by $n_t$ as before. The overlap between locale $t$'s ideal fellow locales and its actual fellow locales is 
\[
\Lambda(t)
= \frac{\bigl|I_t\cap n_t\bigr|}{S_n}
= 1 - \frac{|b_n - t| }{S_n},
\]
where the denominator normalizes the overlap so that $\Lambda(t)$ is comparable across $t$. $\Lambda(t)$ measures the ideality of locale's factual state $n_t$. \Cref{fig:separatism} illustrates the relationship among these notions. We define the level of separatism of locale $t$ as $\sigma(t) \equiv 1 - \Lambda(t)= \frac{|b_n - t|}{S_n}$, which lies between 0 and 1. Then we have
\begin{proposition} \label{prop:separatism}
Separatism within a state is stronger in proximal locales, forming a centripetal force that increases with $\tau$ and $R_n$.
\end{proposition}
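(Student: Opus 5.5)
Fix a state $n$ in the right hemisphere with borders $[b_{n-1}, b_n)$. I would use the explicit separatism measure $\sigma(t) = (b_n - t)/S_n$ for $t \in [b_{n-1}, b_n)$. The plan has two parts. The first part is a within-state comparison showing that $\sigma$ is larger for proximal locales. The second part is a comparative-statics argument showing that the gradient of $\sigma$ across the state, which measures the strength of the centripetal force, rises with $\tau$ and with $R_n$.

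For the first part, $\partial \sigma(t)/\partial t = -1/S_n < 0$. Separatism therefore falls monotonically as one moves distally within the state. It reaches its maximum, close to $1$, at the proximal border $b_{n-1}$ and vanishes at the distal border $b_n$.

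This has a geometric reading. A locale's ideal state $I_t$ is the interval of length $S_n$ ending just beyond itself, so it lies on the proximal side of $t$. A locale sitting at the proximal border would ideally extend its state further toward the WGC than the actual state allows. The pull of separatist sentiment thus points toward the WGC, and this is what I call the centripetal force. Its intensity is the absolute slope $|\partial\sigma/\partial t| = 1/S_n$, or equivalently the mean separatism gap between the two ends of the state.

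For the second part, I would sign the derivatives of $1/S_n$. By the chain rule, $\partial(1/S_n)/\partial\tau = -S_n^{-2}\,\partial S_n/\partial\tau$, so it suffices to determine the sign of $\partial S_n/\partial\tau$ from the first-order condition \eqref{eq:FOC}, $\tau R_n^{\gamma-1}(1-b_{n-1}-S_n)=h$. I would apply the implicit function theorem to this condition, holding $b_{n-1}$ fixed. This is consistent with the recursive construction of the equilibrium, in which the proximal border is already determined when $S_n$ is chosen. The inputs are \eqref{eq:RSeq} and \eqref{eq:Rtau}, and the same computation presumably underlies Appendix~\ref{dhdt} and the sign claims used in \Cref{prop:consensus}.

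The dependence on $R_n$ would be handled the same way. I would parametrize states by their location through $b_{n-1}$, which raises $R_n$ by \eqref{eq:Rleftborder}, and read off how $S_n$, and hence $1/S_n$, moves. \Cref{prop:size} and \Cref{prop:state0} already relate state size to distance from the WGC and should be reused here.

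The main obstacle is a sign tension. The discussion around \Cref{prop:consensus} says that states grow when $\tau$ rises, and \Cref{prop:size} says that more remote states are larger. Both effects shrink the slope $1/S_n$, which seems to contradict the claim that the centripetal force increases with $\tau$ and $R_n$. I would resolve this by measuring the force in absolute distance rather than in normalized form. The proximal border locale is displaced from its ideal by $b_n - t = S_n$, which is larger in larger states. Equivalently, the force can be measured as the utility loss from that displacement, using the first-order condition to convert distance into remoteness through $\tau R_n^{\gamma-1}$, a factor that increases in both $\tau$ and $R_n$. I would first try the welfare version: the loss of the proximal locale relative to its ideal state scales with $\tau$ times the border displacement times $R_n^{\gamma-1}$, each factor of which is increasing. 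If this does not go through, I would fall back to the unnormalized distance $b_n - t$ and invoke the size monotonicity results.
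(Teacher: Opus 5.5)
There is a genuine gap, and it is the sign tension you flag yourself. By taking the centripetal force to be the within-state gradient $|\partial\sigma/\partial t| = 1/S_n$, you commit to a quantity that moves the wrong way. Since $S_n$ rises with $\tau$ and with $R_n$, $1/S_n$ falls, so your main plan would prove the opposite of the statement. Your proposed repairs do not close this. They replace the paper's separatism measure $\sigma$ with different objects: the unnormalized gap $b_n - t$, or a welfare loss whose scaling you assert but never derive. At best they would establish a claim about something other than separatism as defined.

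The missing idea is that the force in the proposition is $\sigma(t)$ itself, evaluated at a locale with a fixed offset $\Delta \equiv t - b_{n-1}\in[0,S_n]$ from the proximal border.
\begin{itemize}
\item Substituting $b_n = b_{n-1}+S_n$ gives $\sigma(t) = 1 - \Delta/S_n$, hence $\partial\sigma/\partial S_n = \Delta/S_n^2 \ge 0$. When the state grows, its distal border recedes from every such locale, so a larger share of the state lies on the locale's distal side, away from its ideal.
\item The size monotonicity you treated as an obstacle is therefore exactly what drives the result. Solving \eqref{eq:FOC} as $S_n = 1 - b_{n-1} - h/(\tau R_n^{\gamma-1})$ gives $\partial S_n/\partial\tau = h/(\tau^2 R_n^{\gamma-1})>0$ and $\partial S_n/\partial R_n = h(\gamma-1)/(\tau R_n^{\gamma})>0$.
\item Hence $\partial\sigma/\partial\tau$ and $\partial\sigma/\partial R_n$ are each $\Delta/S_n^2$ times a positive term.
\end{itemize}
Your implicit-function computation of $\partial S_n/\partial\tau>0$ is fine and can be kept. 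Reading $\partial S_n/\partial R_n$ directly off the first-order condition also avoids your detour through $b_{n-1}$ and \Cref{prop:size}. That detour would import the lower bound on $\tau$ from that proposition as an extra hypothesis the statement does not carry.

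Your fallback to $b_n - t$ was one step from the answer, since it is the numerator of $\sigma$. Keeping the normalization still gives the right sign. Numerator and denominator rise one-for-one with $S_n$, and the numerator $S_n-\Delta$ is the smaller of the two, so the ratio increases.
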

\begin{proof}
Consider the right hemisphere such that $\sigma(t) = \frac{b_n - t}{S_n}$. It is easy to derive $\frac{d\sigma(t)}{dt} = -\frac{1}{S_n} < 0$. Then, for any locale $t\in[b_{n-1},b_n]$, define $\Delta \equiv t - b_{n-1}\in[0,S_n]$. Since $\frac{\partial S_n}{\partial \tau}
= \frac{h}{\tau^{2} R_n^{\gamma-1}} > 0$ and $\frac{\partial S_n}{\partial R_n}= \frac{h(\gamma-1)}{\tau R_n^\gamma} $, it can be verified that
 \[
\frac{\partial \sigma(t)}{\partial\tau}
= \Delta\,\frac{1}{S_n^2}\frac{\partial S_n}{\partial\tau}
= \frac{\Delta}{S_n^2}\cdot \frac{h}{\tau^2 R_n^{\gamma-1}}
>0,  \qquad \frac{\partial \sigma(t)}{\partial R_n}
= \frac{\Delta}{S_n^2}\cdot \frac{h(\gamma-1)}{\tau R_n^\gamma}
> 0.\]
\end{proof}

\Cref{prop:separatism} implies that locales with greater economic interests in the rest of the world exhibit stronger separatist tendencies.\footnote{The leftmost locale in a state in the right hemisphere exhibits the strongest separatism. This does not contradict its earlier role in governing state $n$ (i.e., serving as an overlord). Historically, ruling elites have often courted foreign powers, as in nineteenth-century Russia, Egypt, and Japan. In both the historical episodes and our model, political influence derives from proximity to the foreign polities with which they seek association.} Such economic interests may translate into invasive actions toward the neighboring state on the locale’s proximal side. Early geopolitical analyses, most notably the heartland theory \citep{Mackinder04}, emphasize the strategic importance of Eastern Europe for global security in the early twentieth century.\footnote{These analyses continue to influence modern works in international relations \citep[e.g.,][]{Morgenthau48, Kissinger94, Brzezinski97}.} In these frameworks, Eurasia plays a role analogous to the world line in our theory: because of its central location, Eastern Europe is of crucial strategic importance. Our model not only provides a microfoundation for this line of thought but also extends it. In particular, invasive separatism is stronger in states that are more remote from the rest of the world and weakens as trade frictions $\tau$ fall. Moreover, as $\tau$ falls, invasive separatism in remote states decreases more sharply than elsewhere, a result that follows immediately from the cross-partial derivative $\frac{\partial^2 \sigma(t)}{\partial R_n\,\partial \tau}<0$.

\section{Concluding Remarks \label{sec:conclude}}

Geopolitics centers on two themes: trade and borders. Modeling them jointly is challenging because trade across states is interdependent, borders across states are interdependent, and trade and borders are themselves mutually interdependent. We address this challenge by adopting linearization, a traditional approach in economics. Building on a linear world, the model developed here integrates trade and borders within a single general equilibrium. Through their interaction, trade and borders generate rich geopolitical behaviors both within and across states, providing a unified and tractable framework for analyzing political economy, security, and ideology within and across states.

The elegance and versatility of the model rest on two key simplifications. The first is the use of Cobb--Douglas production and consumption structures. They transform an otherwise complex coordination game---locales organize themselves into states---into a tractable framework in which locales unilaterally select fellow locales. Under CES technology, the model would remain unchanged only if technologies were equally efficient across locales. Under CES consumption, the locational advantages of proximal locales that drive the model would be amplified: in addition to lower consumption costs, proximal locales would also earn higher income due to their larger share of world expenditure. In that case, the constant numerator in $1/R(t)$ would be replaced by an intricate term inversely related to $R(t)$, substantially reducing the model’s tractability.

The second simplification is the linear world assumption. In our model, linear geography guarantees the existence of a unique equilibrium under general conditions. Moreover, borders can be characterized analytically as points on the world line, allowing their behavior to inform a wide range of geopolitical phenomena. In \Cref{sec:network}, we instead model locales as nodes in a network and states as connected components, allowing for arbitrary geography. We show therein that additional structures are required to recover a unique world partition. More importantly, because borders lack analytical characterizations, border behavior cannot be used to analyze geopolitical phenomena in that setting.



\bibliographystyle{te}
\bibliography{benpenglong_bib}

\appendix

\newpage

\begin{center}
\section*{``A Linear Model of Geopolitics'' \\ Appendices \\ {\normalsize Ben G. Li and Penglong Zhang} \\ {\normalsize \today}}
\end{center}

\renewcommand{\thesection}{A}
\numberwithin{equation}{section}
\setcounter{equation}{0}  
\renewcommand{\thetable}{A}
\renewcommand{\thefigure}{A}
\setcounter{figure}{0} \renewcommand{\thefigure}{A\arabic{figure}}
\setcounter{table}{0} \renewcommand{\thetable}{A\arabic{table}}

\subsection{Proofs and derivations}

\subsubsection{Derivation of equation \eqref{eq:aggC}}\label{aggCproof}
On Date 3, at locale $t$, the lord’s first-order condition is
\begin{equation}
p(t,s)c^z(t,s)= \frac{C^z(t)^{1-\gamma}}{\lambda^z(t)} \equiv \kappa^z(t).
\label{eq:lordfoc}
\end{equation}
Substituting this condition into the lord’s budget constraint yields $\kappa^z(t) = r(t)z(t)/2$. Likewise, labor’s first-order condition is
\begin{equation}
p(t,s)c^l(t,s)
= \frac{\psi\, C^l(t)^{1-\gamma}}{\lambda^l(t)}
\equiv \kappa^l(t),
\label{eq:laborfoc}
\end{equation}
which implies $\kappa^l(t) = w(t)l(t)/2$. Thus, the aggregate first-order condition is obtained by summing equations
\eqref{eq:lordfoc} and \eqref{eq:laborfoc}:
\[
p(t,s)\, c(t,s)
= \frac{C^z(t)^{1-\gamma}}{\lambda^z(t)}
+ \frac{\psi C^l(t)^{1-\gamma}}{\lambda^l(t)}
\equiv \kappa(t).
\]

\subsubsection{Proof of \Cref{lemma:suff}\label{suffproof}}
By equation \eqref{eq:lordfoc}, we have $c^z(t,s)=\kappa^z(t)/p(t,s)$. By inserting the $c^z(t,s)$ into $C^z(t)$, we obtain
\begin{eqnarray*}
C^z(t) &=& \exp \left(\int_{-1}^1 (\ln \kappa^z(t)-\ln p(t,s)) ds \right) \\
       &=& \exp \left(\int_{-1}^1 (\ln \kappa^z(t)/p-\ln d(t,s)) ds\right) \\
       &=& \exp \left(2\ln \kappa^z(t)/p-\int_{-1}^1 \ln d(t,s) ds\right) \\
       &=& (\kappa^z(t)/p)^2\exp \left(-\int_{-1}^1 \ln d(t,s) ds\right) \\
       &=& \left(\frac{rz}{2p}\right)^2\exp \left(-\int_{-1}^1 \ln d(t,s) ds\right)\\
       &=& \left(\frac{rz}{2p}\right)^2/R(t),
\end{eqnarray*}
where $p=rz/2$ is the normalized factory-gate price. Thus,  $C^z(t) =1/R(t)$.

\subsubsection{Proof of \Cref{prop:existunique}: Uniqueness}\label{uniqueproof}

Assume that there exists another equilibrium world partition
\begin{equation}
\{\tilde b_{n}\} \equiv 
\{\tilde b_{-\tilde N},...,\tilde b_{-1},\tilde b_{-0},\tilde b_{0},\tilde b_{1},...,\tilde b_{\tilde N}\},\label{eq:newpartitionequ}
\end{equation}
that differs from $\{b_{n}^*\}$. Take the right hemisphere. Consider lord (at) $t=0$ who draws borders $\tilde b_{-0}$ and $\tilde b_0$. With the previously known equilibrium partition $\{b_{n}^*\}$,  
\[
\frac{\partial U(t=0|-b_0,b_0)}{\partial b_0} |_{b_0=b_0^*} = 0, 
\]
which is the first order condition for lord $t=0$. Thus lord $t=0$ who draws borders $\tilde b_{-0}$ and $\tilde b_0$ would be worse off in this scenario than in $\{b_{n}^*\}$ and deviate from this scenario to $\{b_{n}^*\}$.

Then, given $\tilde b_0=b_0^*$, suppose that there exists a such $k \ge 1$ that $\tilde{b}_{n}=b_{n}^*$  for $n=0,1,...,k-1$ and \(\tilde{b}_{k} \neq b_{k}^*\).

\begin{itemize}
\item If $\tilde b_k < b_k^*$, consider lord $t=\tilde b_k $. His remoteness would be based on his own locale, which has a higher remoteness than locale $\tilde b_{k-1}=b_{k-1}^*$ according to equation \eqref{eq:Rleftborder}. So, he would be worse off in this scenario than in $\{b_{n}^*\}$ and deviate from this scenario to $\{b_{n}^*\}$. If he insists on accepting offers made by lords $(\tilde b_{k},b_k^*)$, his acceptance would be invalid because those locales also made offers to lord $\tilde b_{k-1}$, who accepts their offers. Recall that when multiple offers made by a lord are accepted by his proximal lords, only the most proximal one takes effect. 

\item If $\tilde b_k > b_k^*$, consider lord $t=\tilde b_{k-1}$. According to equation \eqref{eq:FOC},
\[
U(t|\tilde b_{k-1},\tilde b_{k})=U(t| b_{k-1}^*,\tilde b_{k})<U(t|b_{k-1}^*,b_k^*).
\] 
Thus, he would be worse off in this scenario than in $\{b_{n}^*\}$ and deviate from this scenario to $\{b_{n}^*\}$.
\end{itemize}

In either case, a contradiction rises and thus there does not exist a such state $k$.

\subsubsection{Proof of \Cref{prop:gravity}}\label{gravproof}
Assume without loss of generality that $n \ge m+1 >0$, so that the two states lie in the right hemisphere.
Moreover, assume that state $n$ is farther from the WGC than state $m$, so that $D_{m,n} = b_{n-1} - b_m$. The export volume from state $m$ to state $n$ is 
\begin{eqnarray*}
{X_{m,n}} &=&  {S_m}\int_{{b_{n-1}}}^{{b_{n}}} {p(s)c(s,{b_m})d} s = \frac{S_m}{2}\int_{{b_{n-1}}}^{{b_{n }}} {\kappa d{{(s,{b_m})}^{ - 1}}d} s\\
&=& \frac{\kappa}{2\tau}{S_m}[\exp \left(  - \tau ({b_{n-1}} - {b_m})\right)  - \exp \left(  - \tau ({b_{n }} - {b_m})\right) ]\\
&=& \frac{\kappa}{2\tau}{S_m} \exp \left(-\tau D_{m,n}\right) \times (1-\exp \left(-\tau S_n\right)).
\end{eqnarray*}
Here, the second equality stems from equation \eqref{eq:localegdp}. Since state sizes are small relative to one (the size of each hemisphere), we have $1-\exp \left(-\tau S_n\right)=\tau S_n$. Thus,
\[
{X_{m,n}}=\zeta S_m S_n\exp \left(-\tau D_{m,n}\right),
\]
where $\zeta \equiv \kappa/2$ applies to all pairs worldwide.

\subsubsection{Proof of \Cref{prop:migration}}\label{migrationproof}
Equation \eqref{eq:FOC} implies
\begin{equation}
\frac{R_n}{R_{m}} = \left(\frac{1-b_{m}}{1-b_{n}}\right)^{\frac{1}{\gamma-1}}.\label{eq:Rratio}
\end{equation}
Inserting equation \eqref{eq:Rratio} into equation \eqref{eq:migrationflowprepare} gives
\[
\frac{S_n}{S_{m}} \frac{S_{m}-M_{m,n}}{S_n+M_{m,n}} = \left(\frac{1-b_{m}}{1-b_{n}}\right)^{\frac{1}{\gamma-1}}.
\]
Here, note that $L_n=S_n$ because the initial labor supply of each locale $s$ (i.e., $l^0(s)$) has been normalized to one. Rearrange the terms to solve for $M_{m,n}$: 
\[
M_{m,n} = S_n S_m \frac{\Phi_n - \Phi_m}{\Phi_n + \Phi_m}, \text{ where } \Phi_{k=m\text{ or }n} = (1-b_{k})^{\frac{1}{\gamma-1}}.
\]

\subsubsection{Proof of \Cref{prop:borderstability}}\label{dhdt}

The first-order condition \eqref{eq:FOC} for state $n$   is equivalent to 
\begin{equation}
F \equiv \tau R_n^{\gamma  - 1}(1 - {b_{n - 1}} - {S_n}) - h=0,
\end{equation}
which implies the following partial derivatives:
\begin{eqnarray}
{F_h} &=&  - 1 < 0,\label{eq:Fh} \\ 
{F_S} &=&  - (\gamma  - 1)R_n^{\gamma  - 1}{\tau ^2}{(1 - {b_n})^2} - \tau R_n^{\gamma  - 1} < 0,\label{eq:FS}\\
{F_{b_{n-1}}} &=& (\gamma  - 1)R_n^{\gamma  - 1}{\tau ^2}(2{b_{n - 1}} + {S_n})(1 - {b_{n - 1}} - {S_n}) - \tau R_n^{\gamma  - 1}, \label{eq:Fb}\\ 
{F_{\tau}} &=& R_n^{\gamma  - 1}(1 - {b_n})\left( 1 + \frac{\tau (\gamma  - 1)}{2}[{(1 + {b_{n - 1}})^2} + {(1 - {b_n})^2}]\right)  > 0. \label{eq:Ftau}
\end{eqnarray}
It follows that
\begin{equation*}
\frac{dh}{d\tau}=-\frac{F_{\tau}}{F_h}=F_{\tau} = R_n^{\gamma  - 1}(1 - {b_n})\left( 1 + \frac{\tau (\gamma  - 1)}{2}[{(1 + {b_{n - 1}})^2} + {(1 - {b_n})^2}]\right)  > 0,
\end{equation*}
and 
\begin{equation*}
\frac{\partial (dh/d\tau)}{\partial b_n}<0.
\end{equation*}

\subsubsection{Proof of \Cref{prop:size}}\label{hypo}
Consider the right hemisphere without loss of generality. By equation \eqref{eq:Fb}, we have $F_{b_{n-1}} > 0$ if
\begin{equation}
\tau > \frac{1}{(\gamma - 1)\, b_0 (1 - b_0)}.
\label{eq:taurange}
\end{equation}
By total differentiation,
\[
\frac{\partial S_n}{\partial b_{n-1}}
= -\frac{F_{b_{n-1}}}{F_S}.
\]
Recalling $F_S < 0$ from equation \eqref{eq:FS}, it follows that
\[
\frac{\partial S_n}{\partial b_{n-1}} > 0,
\]
whenever inequality \eqref{eq:taurange} holds.

\subsubsection{Proof of \Cref{prop:state0}}\label{state0proof} 
Consider the right hemisphere without loss of generality. Since
\[
\frac{\partial^2 b_{n}}{\partial b_{n-1} \partial b_{0}}
= \frac{\partial^2 S_n}{\partial b_{n-1} \partial b_{0}}
= \frac{\partial^2 S_n}{\partial b_{0} \partial b_{n-1}},
\]
it suffices to show that
\[
\frac{\partial^2 S_n}{\partial b_{0} \partial b_{n-1}} > 0,
\]
and that this cross-partial derivative increases with $n$.

A simple manipulation of equation \eqref{eq:FOC} yields
\begin{equation}
\frac{\partial b_n}{\partial b_{n-1}}
= \frac{\partial (b_{n-1} + S_n)}{\partial b_{n-1}}
= 1 + \frac{\partial S_n}{\partial b_{n-1}} > 1,
\label{eq:chainlogic}
\end{equation}
where $\frac{\partial S_n}{\partial b_{n-1}} > 0$ follows from \Cref{prop:size}. By equation \eqref{eq:chainlogic},
\begin{equation}
\frac{\partial b_n}{\partial b_{0}}
= \prod_{i=0}^{n-1} \frac{\partial b_{n-i}}{\partial b_{n-i-1}} > 0,
\label{eq:shock1}
\end{equation}
for any $n \ge 1$. Consequently,
\begin{equation}
\frac{\partial S_n}{\partial b_{0}}
= \frac{\partial S_n}{\partial b_{n-1}}
\frac{\partial b_{n-1}}{\partial b_{0}} > 0.
\label{eq:shock2}
\end{equation}
The first term in equation \eqref{eq:shock2} is positive and increasing in $n$. For larger $n$, $b_n$ must extend farther from $b_{n-1}$, which results in a larger $S_n$. 

We now turn to the second term in equation \eqref{eq:shock2}, which can be written as
\begin{equation*}
\frac{\partial b_{n-1}}{\partial b_0}
= \prod_{i=1}^{n-1} \frac{\partial b_{n-i}}{\partial b_{n-i-1}},
\end{equation*}
using equation \eqref{eq:shock1}. Each term in the product is weakly greater than one, and equals one if all states from $1$ to $n-1$ retain their original sizes. For larger $n$, the product contains an additional term and therefore weakly increases. 

Combining the two terms, we obtain $\frac{\partial^2 S_n}{\partial b_{0} \partial b_{n-1}}>0$ and that this cross-partial derivative increases with $n$. 

\subsubsection{Proof of \Cref{prop:SE}}\label{se}

Denote the ratio of marginal utility effects by
\begin{equation}
\theta_n \equiv 
\frac{\partial V(t)/\partial S_n}{\partial U(t)/\partial S_n}
=\frac{\psi\,C^l(t)^{-\gamma}}{C^z(t)^{-\gamma}}>0.
\label{eq:thetan}
\end{equation}
Hence, the border setter’s FOC becomes
\begin{equation}
(1+\phi\theta_n)\,\tau R_n^{\gamma-1}\big(1-b_{n-1}-S_n\big)=h.
\label{eq:SEFOC}
\end{equation}
By comparing equation \eqref{eq:SEFOC} with equation \eqref{eq:FOC}, $S_n^{\mathrm{SE}} > S_n^*$ is obtained for any $n$. Because the total length of the world is fixed, a weak expansion of every state implies that the total number of states must weakly fall: $N^{\mathrm{SE}} \le N^*$.

\subsubsection{Proof of \Cref{prop:consensus}}\label{consensusproof}
Consider the right hemisphere without loss of generality:
\begin{equation}
\frac{\partial \text{Var}(G_n)}{\partial \tau}
= \frac{1}{N} \sum_{n=1}^{N} G_n
\left(
\frac{\partial S_0}{\partial \tau}
+ 2\sum_{k=1}^{n-1}\frac{\partial S_k}{\partial \tau}
+ \frac{\partial S_{n}}{\partial \tau}
\right).
\end{equation}
Equation~\eqref{eq:FS} implies $F_\tau > 0$, while equation \eqref{eq:Ftau} implies $F_S < 0$.
Hence,
\[
\frac{\partial S_k}{\partial \tau}
= -\frac{F_\tau}{F_S} > 0,
\qquad \forall\, k \ge 1,
\]
which in turn implies $\frac{\partial \text{Var}(G_n)}{\partial \tau} > 0$.

Likewise,
\begin{equation}
\frac{\partial \text{Var}(G_n)}{\partial h}
= \frac{1}{N} \sum_{n=1}^{N} G_n
\left(
\frac{\partial S_0}{\partial h}
+ 2\sum_{k=1}^{n-1}\frac{\partial S_k}{\partial h}
+ \frac{\partial S_{n}}{\partial h}
\right).
\end{equation}
Equation~\eqref{eq:Fh} implies $F_h < 0$, while equation \eqref{eq:Ftau} implies $F_S < 0$.
Hence,
\[
\frac{\partial S_k}{\partial h}
= -\frac{F_h}{F_S} < 0,
\qquad \forall\, k \ge 1,
\]
which implies $\frac{\partial \text{Var}(G_n)}{\partial h} < 0$.

\subsection{Accommodating arbitrary geography with network analysis \label{sec:network}}

In this appendix, we model the world as a set of $N$ locales and represent each state as a subset of these locales. The following model allows for arbitrary geography of locales and illustrates the resulting strategic complications. The concepts and techniques used here draw on graph theory and the strategic network formation literature \citep{JW96,Jackson08}. Henceforth, locales are referred to as \emph{nodes}, which may establish \emph{links} with other nodes in order to form a state. A state corresponds to a \emph{component}: a subset of nodes connected (directly or indirectly) through links. For any linked pair of nodes $(i,j)$, one unit shipped from node $i$ to node $j$ delivers $0<\delta\le 1$ units, symmetrically in either direction. We have \(\delta = 1\) if and only if \(i\) and \(j\) are linked (i.e., belong to the same state). Following the model in the main text, every node supplies one unit of its own distinct good to every other node.\footnote{The Cobb-Douglas production and consumption structures in the main-text model imply that each good receives a fixed expenditure share, which is analogous to the ``one unit'' assumption here.} In this setting, the political map of the world is a \emph{network}, denoted by \(g\), consisting of a set of nodes and the links among them. Some nodes are linked, forming one or multiple states, while others remain isolated.\footnote{Formally, let \(V\) denote the set of nodes and \(G\) the set of all possible networks. Our analysis takes place on a graph represented by \(G=(V,g)\).}

Following the literature, utilities are defined at the level of potential links rather than at the level of nodes. The utility generated by a link between nodes \(i\) and \(j\) is
\begin{equation}
U_{ij} = (1 - \delta)\;-\;\eta G_{ij}\;-\;h N_{ij},
\end{equation}
where \(\eta G_{ij}\) represents the link cost (\(\eta>0\), \(G_{ij}\) is the bilateral distance between \(i\) and \(j\)), and \(h N_{ij}\) captures the congestion cost (\(h>0\), \(N_{ij}\) is the number of nodes in their component). Because transportation costs are symmetric, the value \(U_{ij}\) applies equally to node \(j\) and to node \(i\). Here, arbitrary geography refers to a general $\{G_{ij}\}$ matrix.

The relevant equilibrium concept is pairwise stability \citep{JW96}. A network is pairwise stable if:
\begin{enumerate}
    \item Within-component stability: $\forall$ existing links $(i,j)$:
    \[
    (1 - \delta) - \eta G_{ij} - h N_{ij}^{\text{current}} \geq 0,
    \]
    \item Between-component stability: $\forall$ non-existing links $(i,j)$:
    \[
    (1 - \delta) - \eta G_{ij} - h (N_i + N_j) \leq 0,
    \]
    where $N_i$ and $N_j$ are component sizes before linking.
\end{enumerate}
Unlike a Nash equilibrium, pairwise stability requires mutual consent for adding links, which makes it appropriate for modeling state formation. This constitutes the first structure we need for modeling state formation.\footnote{A separate literature, beginning with \citet{BG00}, adopts the Nash equilibrium by assuming that links are formed unilaterally. This approach does not fit our setting, because locales cannot unilaterally compel others to join a state.} 

Specifically, given a network partitioned into components $\{C_k\}$:
for all $(i,j)$ in the same component $C_k$ of size $n_k$:
\begin{equation}
(1 - \delta) - \eta G_{ij} - h n_k \geq 0, \quad \forall (i,j) \in C_k.
\end{equation}
Let $G_{\text{max}}^{\text{intra}} = \max_{(i,j)\in\text{same component}} G_{ij}$, then:
\begin{equation}
n_k \leq \frac{1 - \delta - \eta G_{\text{max}}^{\text{intra}}}{h}, \quad \forall k.
\end{equation}
And for all $(i,j)$ in different components $(C_k$, $C_l)$:
\begin{equation}
(1 - \delta) - \eta G_{ij} - h (n_k + n_l) \leq 0.
\end{equation}
Let $G_{\text{min}}^{\text{inter}} = \min_{(i,j)\in\text{different components}} G_{ij}$, then:
\begin{equation}
n_k + n_l \geq \frac{1 - \delta - \eta G_{\text{min}}^{\text{inter}}}{h}, \quad \forall k \neq l.
\end{equation}
Thus, a nontrivial \textbf{pairwise stable equilibrium} exists if there is a partition \(\{C_k\}\) such that:\footnote{A nontrivial pairwise stable equilibrium refers to a network that is not empty. In strategic network formation, the empty network---one with no links at all---is often pairwise stable.}
\begin{align}
&\max_{(i,j)\in\text{same component}} \left[\eta G_{ij} + h |C(i)| \right] \leq 1 - \delta \\
&\min_{(i,j)\in\text{different components}} \left[\eta G_{ij} + h (|C(i)| + |C(j)|)\right] \geq 1 - \delta
\end{align}
where $|C(i)|$ denotes the size of node $i$'s component (i.e., locale $i$'s state).

Several observations are in order. First, the equilibrium depends crucially on the underlying geography, namely the distance structure \(G_{ij}\). Second, a larger \(h\) leads to a smaller maximum component size, while a higher \(\eta\) requires shorter distances for links to be stable. Third, simple examples of such equilibria can be constructed numerically. For example, consider seven nodes located on a 2-D plane: \(A(0,0)\), \(B(1,0)\), \(C(0,1)\), \(D(1,1)\), \(E(4,4)\), \(F(5,4)\), and \(G(4,5)\). Let \(\delta = 0.2\), \(\eta = 0.1\), and \(h = 0.05\). Then nodes \(A\), \(B\), \(C\), and \(D\) form one component (state), while \(E\), \(F\), and \(G\) form another component (state). Thus, there are two states in this world. 

Importantly, because the equilibrium is not unique, the model offers limited insight into the precise political map of the modeled world, and even less into how its geopolitics operate. We now impose two additional structures to obtain a unique equilibrium:
\begin{description}
\item[{Activation order}] Nodes activate in order of increasing total connection cost \(TC_i = \sum_j G_{ij} \). When node \( i \) activates, it forms links with existing nodes if mutually beneficial. Assume that $\{TC_i\}$  has no tie.
\item[{Random error}] We introduce pair-specific random errors to the utility function:
\begin{equation}
\tilde{U}_{ij} = (1 - \delta) - \gamma G_{ij} - h N_{ij} + \epsilon_{ij},
\end{equation}
where $\epsilon_{ij}$ are i.i.d. random variables following a continuous distribution with zero mean and bounded support. Assume  $\epsilon_{ij} = \epsilon_{ji}$.
\end{description}
Taken together, when a node becomes active, it evaluates potential links with all other nodes according to the updated utility function and forms a link if and only if \(\tilde{U}_{ij} \ge 0\) (a condition that automatically holds for node \(j\) as well). Because \(G_{ij}\) enters negatively, each node prefers to form links with geographically closer nodes first.

The decision process described above is deterministic at the level of activation order. \(\{TC_i\}\), which contains no ties, determines a unique sequence in which nodes become active. The random errors then break all remaining ties in each activating node’s link decisions. For an activating node \(i\), if forming a link with node \(j\) yields positive utility, we consider whether node \(j\) has already activated. If node \(j\) activated before node \(i\), then \(j\) has already decided whether to link with \(i\), and the resulting \(ij\)-link is pairwise stable. If node \(j\) has not yet activated, then node \(i\) forms the link with \(j\). When node \(j\) activates later, it decides whether to sever the link with node \(i\). If node \(j\) chooses not to sever it, the \(ij\)-link is pairwise stable. The equilibrium selection is stochastic:
\begin{equation}
\mathbb{P}(\text{Network } G)
= \mathbb{P}(\epsilon \in \Omega_G),
\end{equation}
where \(\Omega_G\) is the set of shock realizations that generate network \(G\). Each \(\epsilon \in \Omega_G\) leads to a unique equilibrium.

The above example illustrates why the linear-world setting admits a unique equilibrium without the three additional structures. First, in the linear world, states consist of contiguous locales and all locales prefer to join their proximal neighbors. Consequently, a proximal locale’s unilateral selection of its neighbors is automatically pairwise stable. Second, linear geography naturally endows locales closer to the world’s geometric center (WGC) with greater power, eliminating the need to impose an exogenous activation order to rule out alternative equilibria. Third, because locations lie on a continuum along the world line, random utility shocks are unnecessary to break ties in locale selection.

Now it also becomes clear that the unique equilibrium (in the sense of pairwise stability) in the network-based world described above is difficult to use for characterizing geopolitical phenomena. The resulting political map must be characterized numerically using simulations. Because borders do not have analytical characterizations, their shapes and movements cannot be traced smoothly, limiting their usefulness for studying geopolitics.

\end{document}